\begin{document}

\title{High-Performance and Scalable Fault-Tolerant Quantum Computation with Lattice Surgery on a 2.5D Architecture}

%
\author{Yosuke Ueno}
\affiliation{%
  \institution{RIKEN}
  \city{Saitama}
  \country{Japan}
}
\email{yosuke.ueno@riken.jp}

\author{Taku Saito}
\affiliation{%
  \institution{The University of Tokyo}
  \city{Tokyo}
  \country{Japan}
}
\email{saito-taku808@g.ecc.u-tokyo.ac.jp}

\author{Teruo Tanimoto}
\affiliation{%
  \institution{Kyushu University}
  \city{Fukuoka}
  \country{Japan}
}
\email{tteruo@kyudai.jp}

\author{Yasunari Suzuki}
\affiliation{%
  \institution{NTT}
  \city{Tokyo}
  \country{Japan}
}
\email{yasunari.suzuki@ntt.com}

\author{Yutaka Tabuchi}
\affiliation{%
  \institution{RIKEN}
  \city{Saitama}
  \country{Japan}
}
\email{yutaka.tabuchi@riken.jp}

\author{Shuhei Tamate}
\affiliation{%
  \institution{RIKEN}
  \city{Saitama}
  \country{Japan}
}
\email{shuhei.tamate@riken.jp}

\author{Hiroshi Nakamura}
\affiliation{%
  \institution{The University of Tokyo}
  \city{Tokyo}
  \country{Japan}
}
\email{nakamura@hal.ipc.i.u-tokyo.ac.jp}

\renewcommand{\shortauthors}{Ueno et al.}

\begin{abstract}
Due to the high error rate of a qubit, detecting and correcting errors on it is essential for fault-tolerant quantum computing (FTQC).
Among several FTQC techniques, lattice surgery (LS) using surface code (SC) is currently promising.
To demonstrate practical quantum advantage as early as possible, it is indispensable to propose a high-performance and low-overhead FTQC architecture specialized for a given FTQC scheme based on detailed analysis.

In this study, we first categorize the factors, or \textit{hazards}, that degrade LS-based FTQC performance and propose a performance evaluation methodology to decompose the impact of each hazard, inspired by the CPI stack. 
We propose the \textit{Bypass architecture} based on the bottleneck analysis using the proposed evaluation methodology.
The proposed Bypass architecture is a 2.5-dimensional architecture consisting of dense and sparse qubit layers and successfully eliminates the bottleneck to achieve high-performance and scalable LS-based FTQC.
We evaluate the proposed architecture with a circuit-level stabilizer simulator and a cycle-accurate LS simulator with practical quantum phase estimation problems.
The results show that the Bypass architecture improves the fidelity of FTQC and achieves both a 1.73$\times$ speedup and a 17\% reduction in classical/quantum hardware resources over a conventional 2D architecture.
\end{abstract}





\maketitle

\section{Introduction\label{sec:introduction}}

The inherent noise of quantum computers (QCs) poses a significant obstacle to practical quantum algorithms.
To implement fault-tolerant quantum computation (FTQC) effectively, it is crucial to select an appropriate quantum error correction (QEC) code.
The most promising approach currently is the surface code (SC) combined with lattice surgery (LS). 
SCs leverage a two-dimensional (2D) grid of qubits, where logical qubits are encoded using multiple physical qubits. 
LS facilitates operations on SC-based logical qubits by dynamically modifying their boundaries, effectively ``expanding'' and ``merging'' them~\cite{horsman2012surface}.
In LS-based FTQC, logical qubits are categorized into \textit{data cells}, which store logical states, and \textit{ancillary cells}, which facilitate logical operations on data cells, as detailed in Sec.~\ref{subsec:qubit_plane}.
Due to the constraints of nearest-neighbor connections on physical qubits, particularly in solid-state qubits such as superconducting qubits, ancillary cells are necessary for implementing LS operations. 
Increasing the ratio of data cells $R_{data}$ is essential to reduce the number of extra qubits required.

In general, FTQC performance involves a tradeoff with the hardware resources required.
The performance of LS-based FTQC is determined by various factors, such as the magic state generation rate, paths for LS operations, and the latency and throughput of classical computations for the error-decoding process.
In addition, these factors are complicatedly affected by the $R_{data}$, which makes optimization more difficult.
Thus, a methodology to accurately assess FTQC performance while considering these factors is necessary for achieving high-performance FTQC with minimal resources.

In this work, we address these challenges by introducing a performance analysis tool called the \textit{Code Beats Per Instruction (CBPI) stack}, inspired by the Cycles per Instruction (CPI) stack in classical computing. 
Based on the insights gained from the CBPI analysis, which identified LS path conflicts and long LS paths as significant bottlenecks for scalable FTQC, we propose a novel architecture called the \textit{Bypass Architecture} to address these issues.

The CBPI stack breaks down the impact of various factors on execution time, referred to as \textit{FTQC hazards}, providing a clear understanding of performance bottlenecks, as detailed in Sec.~\ref{sec:performance_metric}. 
Using this tool, we analyzed a quantum phase estimation (QPE) problem under various $R_{data}$ configurations and obtained trade-offs as shown in Fig.~\ref{fig:CBPI_preliminary}. 
As indicated by the ``Path'' in the figure, the arrangement with $R_{data} =50\%$ hindered the simultaneous execution of independent operations due to LS path conflicts.
In addition, the roundabout paths required for LS operations in the $50\%$ arrangement decrease performance, as indicated by ``Decoding'' because these longer paths impose heavier loads on decoders.

\begin{figure}[tb]
    \centering
    \includegraphics[width=\linewidth]{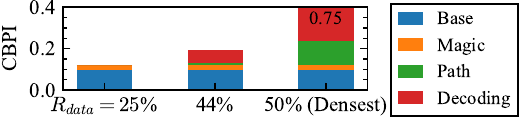}
    \caption{FTQC performance evaluation using CBPI stack for a QPE program. Section~\ref{subsec:setup_LS_simulation} details the experimental setup.}
    \label{fig:CBPI_preliminary}
\end{figure}

The impact of longer LS paths poses a significant challenge not only to execution time but also to the scalability of FTQC systems.
The logical error rate (LER) per operation is approximately proportional to the size of the syndrome graph used for error decoding, meaning that LS operations with longer paths have a higher LER.
To mitigate the impact of high error rate operations, it is necessary to either increase the redundancy of QEC codes or employ more precise decoders.
However, these approaches increase the required quantum and classical resources, worsening the scalability of FTQC systems.
As the scale of FTQC programs grows, the impact of these issues becomes more pronounced because longer LS paths are required.
Thus, reducing the LS path length while maintaining a high $R_{data}$ is essential for scalable FTQC systems.

Wiring length is also a well-known challenge in designing microprocessors for classical computers, and solutions involving layered interconnects have been extensively studied.
However, in LS-based FTQC, leveraging a wiring layer to reduce the length of LS paths has not yet been explored.
This work proposes the Bypass architecture to achieve high-performance and scalable LS-based FTQC with moderate resource overhead by suppressing the length of LS paths.
The Bypass architecture features a 2.5D qubit layout composed of a regular layer and a sparse layer, as shown in Fig.~\ref{fig:research_concept}. 
Our architecture reduces the impact of the decoding process by providing effectively shorter LS paths through the sparse layer and resolves path conflicts by increasing the number of possible paths for LS.

We evaluate the Bypass architecture for essential FTQC subroutines for practical QPE tasks with a cycle-accurate LS simulator and show that it achieves a speedup over the baseline 2D architecture and other 3D architectures with two-qubit layers.
In addition, our 2.5D architecture requires fewer classical and quantum hardware resources than the other architectures, thereby enhancing the scalability of the FTQC system.

Our contributions are summarized as follows.
\begin{enumerate}
\item We summarize hazards affecting LS-based FTQC performance and propose a performance analysis methodology that assesses the impact of each hazard.  (Sec.~\ref{sec:instruction_scheduling} and \ref{sec:performance_metric})
\item We propose the Bypass architecture with a 2.5D qubit layout for high-performance and scalable FTQC. (Sec.~\ref{sec:proposal})
\item Our evaluation shows that the Bypass architecture achieves both a 1.73$\times$ speedup and 17\% reduction in hardware resources over the 2D architecture for a practical QPE program in the moderate resources case. (Sec.~\ref{sec:QEC_performance} and \ref{sec:CBPI_evaluation})
\end{enumerate}

\begin{figure}[tb]
    \centering
    \includegraphics[width=\linewidth]{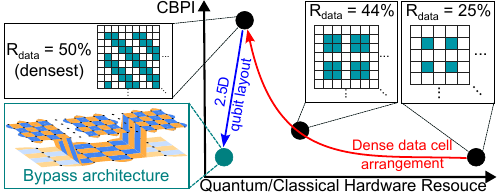}
    \caption{Our research goal.}
    \label{fig:research_concept}
\end{figure}

\section{Background on LS-based FTQC\label{sec:background}}
\subsection{Quantum error correction with SCs\label{subsec:surface_code}}
\begin{figure*}[tb]
    \centering
    \includegraphics[width=\linewidth]{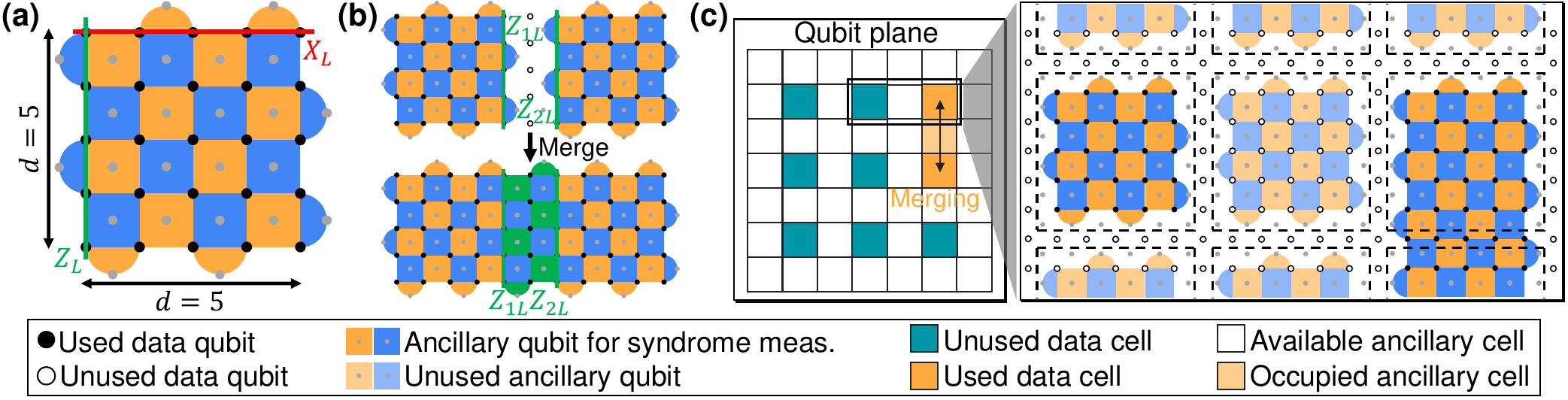}
    \caption{(a) Stabilizer-level picture of SC ($d=5$). 
    (b) Merge operation for \measzz instruction (Stabilizer-level view). (c) Cell- and stabilizer-level views of qubit plane during \measxx instruction.}
    \label{fig:surface_code_d3}
\end{figure*}

The SC is one of the most promising QEC codes, which can be implemented on a 2D qubit plane~\cite{bravyi1998quantum,kitaev1997quantum}. 
Figure~\ref{fig:surface_code_d3}\,(a) shows a schematic picture of a rotated SC with code distance $d = 5$.
SC consists of two types of physical qubits: data and ancillary qubits. 
Data qubits represent a logical qubit, while ancillary qubits are utilized to check the parity of errors on the neighboring data qubits. 
This parity check and its binary outcome are called a stabilizer measurement and a syndrome value, respectively. 
To deal with both bit-flip (Pauli-$X$) and phase-flip (Pauli-$Z$) errors, QEC needs two types of stabilizer measurements, \textit{i.e.}, $X$- and $Z$-stabilizer measurements. 
$X$- and $Z$-stabilizer measurements can detect the Pauli-$Z$ and Pauli-$X$ errors on the neighboring data qubits, respectively.

If we assume a noise model where Pauli errors occur probabilistically on data qubits, the estimation of the most likely Pauli errors can be reduced to the minimum-weight perfect matching (MWPM) problem on the decoding graph. 
In this graph, nodes and edges correspond to the syndrome values and the data qubits, respectively~\cite{fowler2012surface}.
Even when ancillary qubits also suffer from noise, we can reliably estimate errors by extending the MWPM problem to a 3D decoding graph by considering stacked 2D snapshots of $d$ syndrome measurements. 
In this paper, we refer to the time taken for one syndrome measurement as a \textit{code cycle} and that for $d$ syndrome measurements, \textit{i.e.}, $d$ code cycles, as a \textit{code beat}.

\subsection{LS and gate teleportation with magic states\label{subsec:bg_lattice_surgery}}

For universal FTQC, we need to perform a universal gate set on encoded logical qubits in a fault-tolerant manner.
The standard logical operations set is summarized as follows. 
\begin{itemize}
    \item Initialization of a logical qubit in a $Z$ or $X$ basis.
    \item Destructive measurement of a logical qubit in $Z$ or $X$ basis.
    \item Single-qubit operations, such as Hadamard and phase gates, and $T$ gates with magic states.
    \item Multi-qubit operations via multi-body Pauli measurements. 
\end{itemize}
Initialization and destructive measurement of a logical qubit can be achieved straightforwardly. 
In addition, Hadamard and phase gates are performed by expanding and shrinking SCs through additional syndrome measurement.

The LS technique implements multi-qubit Pauli measurements on SC-based logical qubits using only neighboring physical qubit operations.
One of the minimum examples of LS, a logical Pauli-$ZZ$ measurement on two logical qubits, is shown in Fig.~\ref{fig:surface_code_d3}\,(b). 
As shown in the figure, 1) we initialize all the sandwiched physical qubits to physical $\ket{+}$ states, 2) two SCs are merged by performing another set of stabilizer measurements and repeating them for one code beat, and 3) split it into two planes by performing the original stabilizer measurements and all the sandwiched physical qubits are measured in the Pauli-$X$ basis.
This merge operation with the smooth boundaries implements a logical Pauli-$ZZ$ measurement on the two logical qubits. 
The outcome of the logical Pauli measurement is calculated from the parity of the outcomes of Pauli-$X$ stabilizer measurements in the first cycle of LS. 
The Pauli-$XX$ measurement can also be performed similarly by merging the rough boundaries of logical qubits.

The logical $T$ gate is indirectly performed with the gate-teleportation technique by consuming a logical qubit prepared in the magic state $\ket{M} = (\ket{0} + e^{i\pi/4} \ket{1})/\sqrt{2}$~\cite{bravyi_magic_state}. 
While the direct preparation of high-fidelity $\ket{M}$ in the logical space is difficult, the MSD protocol constructs a clean magic state from several noisy magic states~\cite{bravyi_magic_state}. 
The area for MSD is called the \textit{magic-state factory}, or \textit{factory}.

Factories introduce space and time overhead to FTQC.
For example, a typical factory implementation with 15-to-1 MSD protocol~\cite{bravyi_magic_state} requires a space of 24 SC cells and five repetitions of eight-qubit logical Pauli measurements to prepare a clean magic state.
Therefore, factories are considered a major bottleneck in large-scale FTQC~\cite{babbush2018encoding,gidney2021rsa}. 
However, many recent theoretical studies have proposed more efficient MSD protocols and efficient implementations of factories to reduce their costs~\cite{gidney2019efficient,litinski2019magic,tan2024sat,itogawa2024even,hirano2024leveraging,gidney2024magic}.

\subsection{Qubit plane for LS-based FTQC\label{subsec:qubit_plane}}
We suppose that qubits are integrated on a 2D plane, called the \textit{qubit plane}, in the baseline FTQC architecture to perform the essential logical operations in the previous subsection.
The qubit plane is segmented into multiple distance-$d$ SC cells, as represented by the dashed squares on the right side of Fig.~\ref{fig:surface_code_d3}\,(c).
Note that each cell contains $d^2$ data qubits and $(d+1)^2$ ancillary qubits, with all data qubits and $d^2-1$ ancillary qubits used for distance-$d$ SC.

Each cell on a qubit plane is divided into two roles as follows.
Throughout the FTQC process, certain cells are designated for storing single-qubit information as logical qubits, which we call \textit{data cells}. 
On the other hand, additional cells, termed \textit{ancillary cells}, serve as functional areas for executing logical operations on data cells. 
In addition, there is a space between each cell, consisting of $d$ data qubits, which we call the \textit{cell gap} or simply the \textit{gap}.
The left side of Fig.~\ref{fig:surface_code_d3}\,(c) shows a cell-level view of a qubit plane, which depicts data and ancillary cells with green and white, respectively.

During a logical operation, data cells and one or several ancillary cells may be occupied corresponding to the operation, as shown on the right side of the figure. 
Given that SCs feature two types of boundaries, each cell is aligned in two orientations: the top and bottom as $X$ boundaries and the remaining sides as $Z$ boundaries, or vice versa.
For simplicity, this paper uses only the former orientation.
When operations that alter the boundary configuration, such as a logical Hadamard gate, are performed, we compensate for these changes by using SC rotation operations~\cite{litinski2019game}.

\subsection{LS instruction set\label{subsec:instruction_set}}

\begin{table}[tb]
\caption{LS instruction set used in this paper.}
\label{tab:instruction_set}
\scriptsize
\begin{tabular}{|l|l|l|l|} \hline 
Operation                                                                        & Duration     & Operands & Effect                                                                                                         \\ \hline \hline
\begin{tabular}[c]{@{}l@{}}\texttt{INIT\_Z}\\ (\texttt{INIT\_X})\end{tabular}    & 1 code cycle & $a_0$& \begin{tabular}[c]{@{}l@{}}Initialize ancillary cell $a_0$\\ in logical $\ket{0}$ ($\ket{+}$) state.\end{tabular}  \\ \hline
\texttt{OP\_H}                                                                   & 3 code beats & $d_0$& \begin{tabular}[c]{@{}l@{}}Perform Hadamard gate on data cell $d_0$.\end{tabular}                        \\ \hline
\texttt{OP\_S}                                                                   & 2 code beats & $d_0$& \begin{tabular}[c]{@{}l@{}}Perform phase gate on data cell $d_0$~\cite{beverland2022assessing}.\end{tabular} \\ \hline
\begin{tabular}[c]{@{}l@{}}\texttt{MEAS\_Z}\\ (\texttt{MEAS\_X})\end{tabular}    & 1 code cycle & $d_0, c$& \begin{tabular}[c]{@{}l@{}}Destructively measure data cell $d_0$ \\  in $Z$ ($X$) basis.\end{tabular}            \\ \hline
\begin{tabular}[c]{@{}l@{}}\texttt{MEAS\_ZZ}\\ (\texttt{MEAS\_XX})\end{tabular}  & 1 code beat  & $d_0, d_1, c$& \begin{tabular}[c]{@{}l@{}}Perform $ZZ$ ($XX$) measurement\\on two data cells $d_0$ and $d_1$.\end{tabular} \\\hline
\end{tabular}
\end{table}

\begin{figure}[tb]
    \centering
    \includegraphics[width=\linewidth]{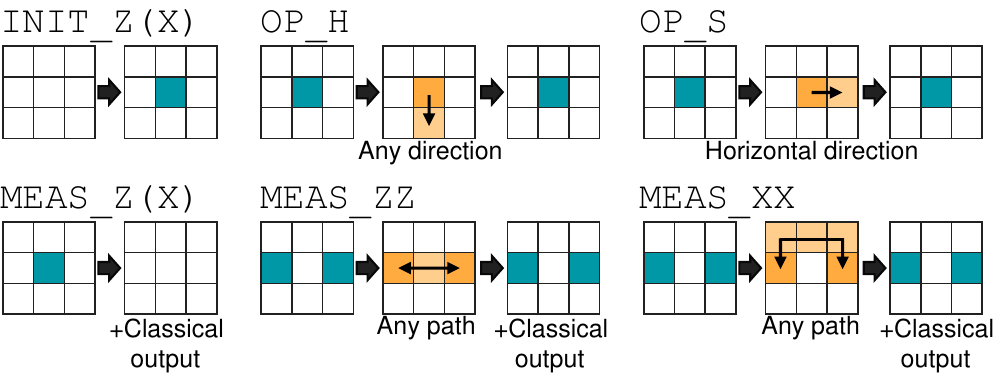}
    \caption{Procedure of each instruction in Table~\ref{tab:instruction_set}.}
    \label{fig:instruction_set}
\end{figure}

\begin{figure*}[tb]
    \centering
    \includegraphics[width=\linewidth]{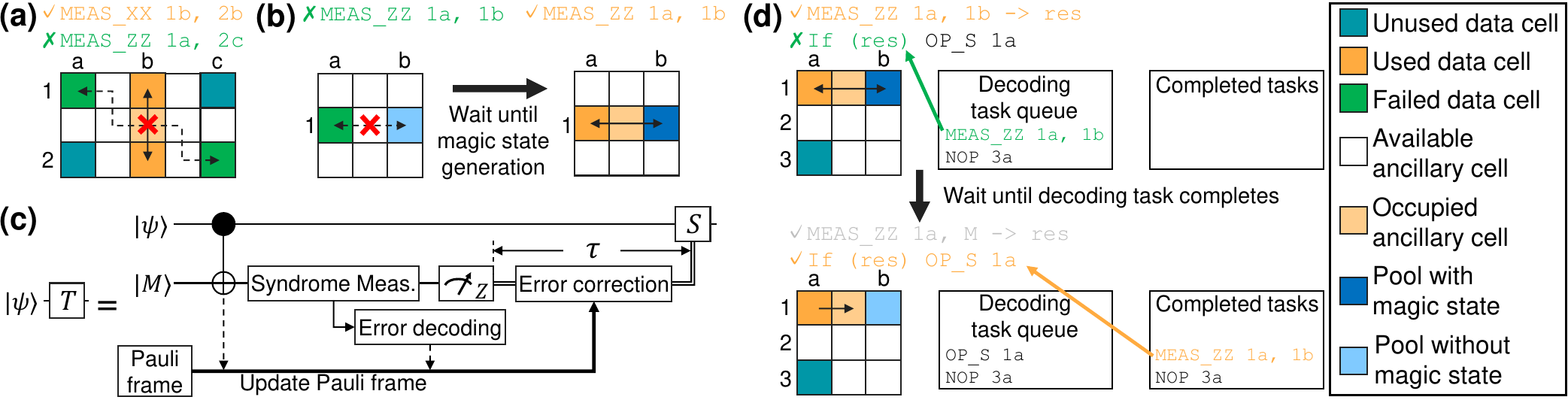}
    \caption{(a) Path hazard caused by two LS operations. 
    (b) Magic hazard.
    (c) A gate-teleportation circuit to perform $T$ gate with a magic state $\ket{M}$ and its error decoding scheme with Pauli frame\cite{knill2005quantum}. 
    (d) Decoding hazard.}
    \label{fig:hazards}
\end{figure*}

To execute a given quantum program with LS-based FTQC, we need to translate it into a sequence of LS-operable instructions.
In this paper, we refer to a set of such instructions as \textit{LS instruction set} or simply \textit{instruction set}.
An LS instruction set contains the following types of instructions to support universal quantum computation: logical state initialization of a cell, magic-state generation at magic-state factories, $S$, $H$, one-body Pauli measurements, and multi-body Pauli measurements with merge-and-split of data cells.
Note that it is unnecessary to include logical Pauli operations in the instruction set because they can be implemented using the Pauli frame~\cite{knill2005quantum} without actual manipulation of qubits.

In this paper, we focus on the succinct instruction set summarized in Table~\ref{tab:instruction_set}. 
The $d_i$ ($a_i$) in the Operands column represents the $i$-th data (ancillary) cell, and the $c$ represents the 1-bit classical register to store a logical measurement outcome. 
Figure~\ref{fig:instruction_set} visualizes each instruction of the set on a qubit plane. 
As explained in Sec.~\ref{subsec:bg_lattice_surgery}, one-qubit instructions, such as \oph and \opsgate, are performed by expanding and shrinking a data cell.
In addition, the \measzz and \measxx are executed by merging and splitting the appropriate boundaries of two data cells with a path of ancillary cells.

As shown in the figure, the choice of direction for expanding the data cell (path between the two data cells) during \oph and \opsgate (\measzz and \measxx) instructions is arbitrary and is not determined by the instructions themselves.
To perform LS-based FTQC, we need to appropriately select the directions and the paths to map the LS instructions sequence onto a qubit plane.
As detailed in Sec.~\ref{sec:instruction_scheduling}, we assume that these directions and paths are determined dynamically during the instruction scheduling.

Note that the instruction set does not include any operations for magic-state generation to be separated from the implementation of the factory, which has many variations~\cite{litinski2019game,litinski2019magic}.
For simplicity, we assume each magic state generated in a factory is supplied to a corresponding cell, which we call \textit{magic-state pool}.

This paper assumes all FTQC programs are performed with the instruction set.
However, note that our discussion is not limited to the specific instruction set and applies to any instruction set of FTQC, as long as it is built based on topological stabilizer codes.

\section{LS instruction scheduling and hazards\label{sec:instruction_scheduling}}

\subsection{LS instruction scheduling}

As shown in Fig.~\ref{fig:instruction_set}, LS instructions are executed while occupying specific cells on the qubit plane.
Therefore, scheduling LS instructions involves mapping a given 1D sequence of LS instructions onto the qubit plane appropriately. 
Note that this process includes determining the direction for expanding the data cell for \oph and \opsgate and the path between the two data cells for \measzz and \measxx.

For high-performance FTQC, it is desirable to utilize the instruction-level parallelism of the program and execute as many instructions simultaneously as possible.
In this study, we assume that LS instructions can be executed out-of-order as long as there are no data dependencies.
In such a situation, ideally, the FTQC performance would be determined by the maximum depth of data dependencies in the program.
However, in reality, various factors, which we call hazards, hinder the execution of LS instructions and degrade FTQC performance, as summarized in Sec.~\ref{subsec:hazards}.
In addition, these hazards introduce tradeoffs between execution time and resources.

To simplify the discussion, we assume that the LS instruction scheduling follows a greedy policy, where any executable instructions are always executed within a code beat, and the shortest available path for the given \measzz and \measxx instructions is chosen.
Note that optimal scheduling is known to be an NP-hard problem~\cite{herr2017optimization,molavi2023compilation}.

\subsection{Hazards on LS-based FTQC\label{subsec:hazards}}

\subsubsection{Path conflict}
To execute an instruction on the qubit plane, appropriate ancillary cell(s) must be available for the instruction and its operand data cell(s).
Otherwise, the instruction will not be executed, which is called ``path hazard''.
Figure~\ref{fig:hazards}\,(a) shows an example where the execution of the \measzz is blocked due to the absence of an appropriate ancillary path between the data cells because the path is occupied by the \measxx.
If other instructions occupy ancillary cells required for the target instruction, we need to wait for a certain number of code beats until the required cells are freed to execute the target instruction.
We refer to the additional code beats caused by path hazards as ``path penalty''.

Possible approaches to decrease penalties on FTQC program execution are increasing the ratio of ancillary to data cells~\cite{lao2018mapping} on a qubit plane and building an LS architecture with multiple qubit layers~\cite{viszlai2023architecture}.
However, both approaches require an increased number of physical qubits.

\subsubsection{Shortage of magic states}
Magic states are consumed by non-Clifford gates and are generated at magic-state factories at regular intervals of several code beats.
For example, the factory proposed in Ref.~\cite{litinski2019game} requires 15 code beats to produce a magic state with a sufficient LER.
As shown in Fig.~\ref{fig:hazards}\,(b), if all magic states are consumed when an instruction requiring a magic state is to be executed, it will not be executed, which we call a ``magic hazard''.
When magic hazards occur, we must delay the instructions with magic states until factories generate new magic states.
Additional code beats due to magic hazards are called ``magic penalty''.

We can decrease the penalty by increasing the number of factories. 
However, since each factory demands many physical qubits, this approach introduces hardware overheads.

\subsubsection{Decoding process\label{subsubsec:decoding_process}}
When performing a logical $T$ gate using the gate teleportation technique with a magic state, we must decide whether to apply an $S$ gate based on the logical measurement of a logical qubit prepared as a magic state, as shown in Fig.~\ref{fig:hazards}\,(c).
For the conditional branch, the logical measurement result must be reliable, \textit{i.e.}, all error-decoding tasks associated with the logical measurement must be completed before the branch. 
If any decoding tasks remain when performing gate teleportation, the logical operation controlled by the logical measurement result will not be executed, which we call a ``decoding hazard''.
Figure~\ref{fig:hazards}\,(d) illustrates an example of decoding hazard, where \opsgate is controlled by the result of \measzz with a magic state.
In this situation, the \opsgate is executed after the decoding task for the \measzz is completed.
Decoding is also necessary to protect data cells not executing instructions, represented as \nopgate in the figure.
We refer to the additional code beats caused by decoding hazards as a ``decoding penalty''.

A naive approach is to prepare ample classical computational resources for decoding; a decoder with sufficiently low latency and high throughput, capable of complete online decoding for any decoding task, can completely mitigate the hazards ~\cite{terhal2015quantum,skoric2023parallel,battistel2023real}. 
However, a computationally expensive decoder may not be desirable for the scalability of FTQC systems. 
Specifically, in scalable superconducting FTQC systems, where the decoder must be placed inside a cryogenic environment~\cite{tannu2017cryogenic,holmes2020nisq,ueno2021qecool,ueno2022qulatis,ueno2022neo,byun2022xqsim}, the decoder must therefore be resource-efficient. 

Another approach is to simplify decoding tasks during the execution of FTQC programs, making them manageable even with a resource-limited decoder.
In general, the difficulty of decoding tasks increases proportionally with the size of the syndrome graph for decoding; LS operations with longer paths impose more challenging decoding tasks on decoders. 
Thus, executing FTQC programs with shorter LS paths mitigates the decoding penalty.

\section{Performance analytical methodology\label{sec:performance_metric}}
To conduct a bottleneck analysis and appropriately improve LS-based FTQC architecture, we propose a metric named ``code beats per instruction (CBPI)'' and a performance analytical methodology named ``CBPI stack'', inspired by the concept of CPI and the CPI stack.
The CPI stack is a performance analysis methodology for processors that decomposes the overall CPI into distinct categories based on sources of performance loss, such as cache misses and branch mispredictions. 
This breakdown allows architects to quantify the impact of each factor on processor efficiency, thereby identifying potential areas for optimization.

The CBPI represents the average number of code beats required to execute a single instruction, enabling the estimation of latency for each instruction and the overall execution time of an FTQC program. 
CBPI is calculated for a specific benchmark program.
Its value varies based on the architecture configuration, including the data cell arrangement, the number and generation rate of magic-state factories, and the QEC decoder resources.

The CBPI stack is a methodology for visualizing performance bottlenecks, inspired by the CPI stack, as shown in Fig.~\ref{fig:CBPI_preliminary}.
It allows us to identify and prioritize areas for improving FTQC architectures by breaking down the impact of each penalty on CBPI. 
The size of each part in the CBPI stack is proportional to its impact on the total CBPI, ensuring that the sum of all parts equals the total CBPI.

To display the CBPI stack, CBPI is calculated for scenarios where each hazard is ignored, and the residuals are then stacked.
First, we begin with the `Base', which represents the ideal CBPI without any hazards. 
The Base CBPI is calculated through simulation with an infinite magic state generation rate by factories, simultaneous execution of instructions despite intersecting paths, and fully real-time decoding.
Then, the impact of each hazard is considered individually to calculate the differences in CBPI, which are displayed as a stacked graph to form the CBPI stack.

Note that calculating the CPI stack accurately is challenging when multiple hazards may occur concurrently. 
This issue similarly applies to the CBPI stack for FTQC architectures. 
Detailed methodologies for calculating the stack, similar to those explored for CPI in classical computing contexts~\cite{eyerman2017multi}, are left as future work. 
In this paper, we calculate the CBPI stack in the sequence of ``Base'', ``Magic'', ``Path'', and ``Decoding'' to simplify the analysis.

\section{Bypass architecture with 2.5D qubit layout\label{sec:proposal}}
\subsection{Overview\label{subsec:2.5d_arch}}

\begin{figure*}[tb]
    \centering
    \includegraphics[width=\linewidth]{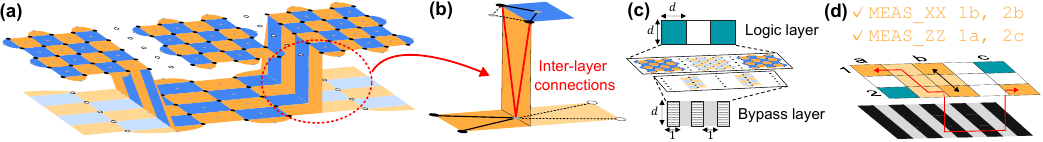}
    \caption{
    (a) A 3D view of the Bypass architecture.  
    (b) Inter-layer stabilizer in the circled area of (a) and its 3D qubits placement. 
    (c) Cell-level view of the Bypass architecture. 
    (d) Executing \measxx and \measzz operations simultaneously using Bypass layer.}
    \label{fig:2.5d_qubit_plane}
\end{figure*}

To achieve high-performance and scalable FTQC, we propose the Bypass architecture, which features a 2.5D qubit layout composed of a regular layer and a sparse layer. 
In the Bypass architecture, we call the regular (sparse) layer the \textit{Logic layer} (\textit{Bypass layer}). 
Figure~\ref{fig:2.5d_qubit_plane}\,(a) depicts an overview of the Bypass architecture during an LS operation.
Each cell in the Logic layer, which consists of $d \times d$ data qubits, has a corresponding ``SC fragment'' made up of $d \times 1$ data qubits in the Bypass layer, as shown in Fig.~\ref{fig:2.5d_qubit_plane}\,(c).
Each cell gap in the Logic layer has a corresponding fragment consisting of $d$ data and $2d$ ancillary qubits in the Bypass layer.

LS operations through the Bypass layer are performed using the SC fragments to make wide-rectangle-shaped intra-layer stabilizers, as shown in Fig~\ref{fig:2.5d_qubit_plane}\,(a).
By leveraging fragments, the Bypass layer performs longer LS operations with fewer physical qubits, mitigating decoding hazards. 
However, the path connections are restricted to directions orthogonal to the alignment of qubits within the Bypass layer.
In addition, the Bypass architecture provides multiple path options to reduce path conflicts.
These features can be likened to bypass surgery in medicine, where blood flow is redirected around blocked arteries to restore normal circulation.

\subsection{Benefits of Bypass architecture}
Because of two key features, which will be described in the following subsubsections, the Bypass architecture reduces path and decoding hazards and improves the LER of LS operations.
Note that the first feature is common to all the architectures with 3D-stacked qubit layers including the Bypass architecture, while the second is unique only to the Bypass architecture.

\subsubsection{Providing multiple LS paths options\label{subsubsec:3d_layer_construction}}

Despite the constraint that only vertical or horizontal paths can be connected via the Bypass layer, the Bypass architecture allows two LS paths that conflict on a 2D qubit plane to intersect.
As shown in Fig.~\ref{fig:2.5d_qubit_plane}\,(d), the Bypass architecture functions effectively in the same situation as Fig.~\ref{fig:hazards}\,(a) by executing the \measzz operation horizontally through the Bypass layer.
As a result, the Bypass architecture leverages its 3D-stacked structure to mitigate the path penalty.

In addition, in situations where roundabout LS paths are unavoidable on a 2D qubit plane, the Bypass architecture provides alternative, shorter LS path options through its 3D structure. 
As discussed in Sec.~\ref{subsubsec:decoding_process}, LS operations with shorter paths alleviate the demands on decoders.
Thus, the 3D structure of the Bypass architecture is also effective in mitigating the decoding penalty.

\subsubsection{Performing longer-path LS operations with fewer qubits\label{subsubsec:shorter_path}}

In the Bypass architecture, LS operations between distant data cells in the Logic layer can be executed with fewer physical qubits by utilizing SC fragments in the Bypass layer. 
Let $L$ denote the path length of an LS operation, defined as the number of cells allocated during the LS operation, including data cells. 
Let $L'$ represent the effective path length of an LS operation, defined as the number of data qubits involved in a given LS instruction divided by $d^2$.

Figure~\ref{fig:with_and_without_wiring_layer} compares two \measzz operations with and without the Bypass layer in a stabilizer-level view.
As shown in the figure, without a Bypass layer, executing a \measzz operation requires $d \times (Ld + L - 1) \sim O(Ld^2)$ data qubits, and the effective path length $L'$ is $O(L)$.
In contrast, employing a Bypass layer reduces the number of required data qubits to $d \times (2d + 2L - 3) \sim O(d^2 + Ld)$ and $L'$ to $O(L/d)$. 
Thus, the Bypass layer reduces the effective LS path length $L'$ by a factor of $d$, thereby mitigating the decoding penalty.

In addition, reducing the effective path length of LS operations by the Bypass architecture improves the fidelity of the entire FTQC program, as the LER of each LS operation is proportional to the number of qubits involved.
As a result, the Bypass architecture may enable the design of FTQC architectures with shorter code distances, requiring smaller quantum and classical hardware resources.

\begin{figure}[tb]
    \centering
    \includegraphics[width=\linewidth]{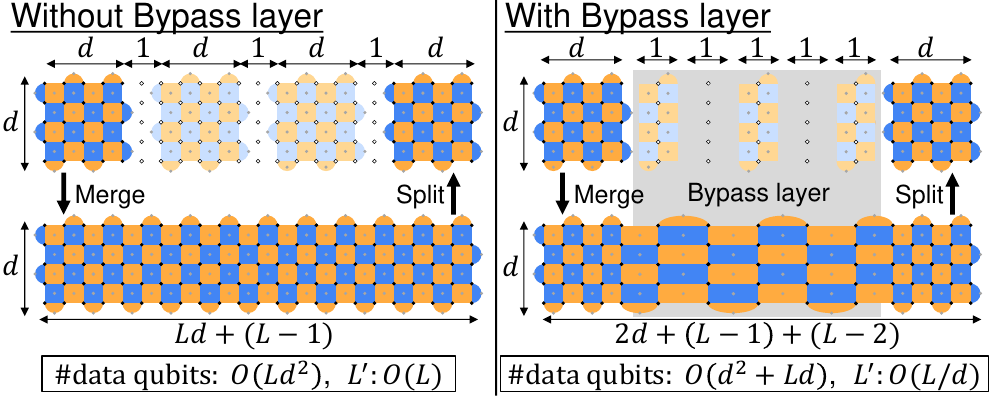}
    \caption{LS operation with and without Bypass layer.}
    \label{fig:with_and_without_wiring_layer}
\end{figure}

\subsection{Implementation\label{subsec:HW_implementation}}
\begin{figure*}[tb]
    \centering
    \includegraphics[width=\linewidth]{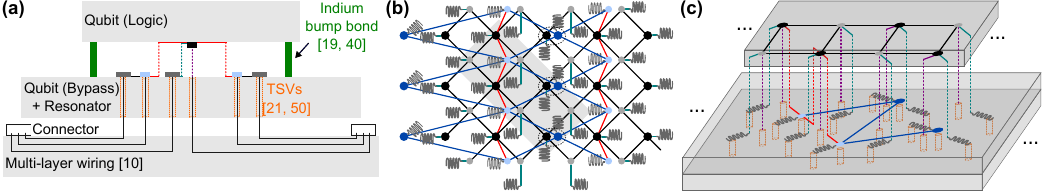}
    \caption{Implementation of the Bypass architecture. (a) Side view.
    (b) Top view. (c) Bird's-eye view of the shadowed part of (b). }
    \label{fig:HW_implementation}
\end{figure*}

Figure~\ref{fig:2.5d_qubit_plane}\,(b) shows the 3D arrangement of qubits that make up an inter-layer stabilizer between the Logic and Bypass layers.
As shown in the figure, we place a column of ancillary qubits in the Bypass layer vertically straight down from the leftmost column of data qubits of a cell in the Logic layer.
In addition to the intra-layer connections (represented as solid black lines in the figure) in the Bypass layer, the inter-layer CNOT operations (solid red lines) between the ancillary qubits in the Bypass layer and the data qubits in the Logic layer constitute the inter-layer stabilizer.

For the Bypass architecture, the qubits constituting inter-layer stabilizers require a 6-degree connection, including temporarily unused intra-layer connections (dashed black lines), while the others require a 4-degree connection.
Note that even for such qubits, the number of connections that a qubit uses simultaneously within one code cycle is kept at up to four.
Moreover, the proportion of qubits requiring a 6-degree connection is $\frac{2d}{d^2+(d+1)^2} \sim \frac{1}{d}$ per cell in the Logic layer, and $\frac{1}{3}$ in the Bypass layer.

In this paper, we focus mainly on implementation using superconducting qubits.
Several advanced fabrication technologies can be employed to achieve 3D qubit stacking, including through-silicon vias (TSVs)~\cite{yost2020solid,hazard2023characterization}, multi-layer wiring~\cite{dial2022eagle}, and flip-chip bonding~\cite{gold2021entanglement,smith2022scaling}. 
Stacking two qubit chips with comparable density, such as in a two-Logic-layer configuration, without compromising the density of the original qubit chip presents significant challenges. 
In contrast, a qubit chip for the Bypass layer can be introduced with minimal reduction in the density of the original Logic-layer qubit chip by leveraging the lower density of the Bypass layer.

The Bypass architecture can be implemented by positioning two qubit chips face-to-face and connecting them via flip-chip bonding (green bars), as shown in Fig.~\ref{fig:HW_implementation}\,(a). 
Here, all inter-chip couplings, including those for two-qubit gates for inter-layer stabilizers (red lines), are capacitive, as represented by dotted lines in the figure. 
The top view of Fig.~\ref{fig:HW_implementation}\,(b) shows qubits on the upper (lower) chip as black and gray (dark and light blue) circles and the resonators as dark gray marks.
Here, the lower chip is slightly shifted to the right to prevent undesirable interference between the qubits in the dotted circles in Fig.~\ref{fig:HW_implementation}\,(b) when they come closer, while this shift is not depicted in Figs.~\ref{fig:HW_implementation}\,(a) and (c) for simplicity.
As shown in Figs.~\ref{fig:HW_implementation}\,(b) and (c), the upper (lower) chip has a higher (lower) qubit density to form the Logic (Bypass) layer.
Control lines for the qubits on the upper chip (purple lines) are routed from the opposite side of the lower chip via TSVs (orange dotted cylinders).
The resonators connecting to the qubits on the top chip are located on the lower chip through inter-chip connections (teal lines), leveraging the low density of the lower chip.
As a result, the Bypass layer can be introduced with minimal reduction in the qubit density of the upper chip compared to that of a single-Logic-layer configuration.

Figure~\ref{fig:HW_implementation} illustrates one possible implementation of the Bypass architecture.
However, note that our architecture proposal and simulations are not limited to the specific implementation.

Stacking three or more qubit chips without compromising qubit density is challenging due to the significantly increased wiring complexity. 
Thus, the investigation of architectures with three or more layers is out of the scope of this paper and left as future work.

\subsection{Flexibility with different code distances}
\begin{figure}[tb]
    \centering
    \includegraphics[width=\linewidth]{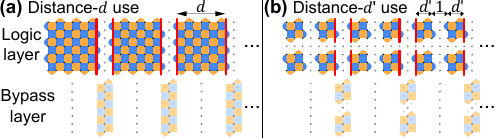}
    \caption{Use of Bypass with different code distances.}
    \label{fig:different_distances_usage}
\end{figure}

We need to predefine a certain code distance $d$ for each cell at the design phase; however, this design constraint does not significantly limit the flexibility of the code distance for each cell compared to the conventional 2D architecture. 
Figure~\ref{fig:different_distances_usage} shows an example where the Bypass architecture designed with $d = 7$ is used with cells having a code distance $d' = 3$.
Since the left and right boundaries of each cell are equivalent in terms of LS, as shown in Fig.~\ref{fig:different_distances_usage}\,(b), the cell can use the Bypass layer as long as its left or right boundary connects to the Bypass layer (red lines). 
In other words, when accommodating cells with a different code distance $d'$ than the design-phase distance $d$, the Bypass architecture can be employed by ensuring that either boundary of each cell aligns with the Bypass layer, albeit with a slight decrease in hardware utilization efficiency.

\subsection{Differences with long-range coupler}
The two benefits of the Bypass layer described in the previous subsection can also be achieved by connecting distant cells in the Logic layer using long-range couplers, such as assumed in Ref.~\cite{bravyi2024high}.
Our Bypass layer differs from this approach in the following points.

\noindent
\textbf{Programmability:} The Bypass layer dynamically determines connecting paths by selecting the stabilizers to activate, whereas the connections between cells through long-range couplers are fixed at the design phase.

\noindent
\textbf{Path length limit: }
The Bypass layer achieves LS paths of any length through local connections between adjacent qubits. 
For long-range couplers, as stated in the concluding remarks of Ref.~\cite{bravyi2024high}, connections that exceed a certain length determined by the frequency of qubits significantly increase PER.

Combined with the above differences, when considering each approach as a communication component, the long-range coupler acts merely as wiring, whereas the Bypass layer functions as a network switch and relay. 
The sparsely placed qubits in the Bypass layer programmatically determine the LS path and connect longer paths without significant impact on the operation error rate.

\section{LER evaluation\label{sec:QEC_performance}}
\subsection{Simulation setup and error model\label{subsec:QEC_sim_setup}}
We perform circuit-level stabilizer simulations using Stim~\cite{gidney2021stim} and PyMatching~\cite{higgott2023sparse} to estimate the LERs of \measzz operations with path length $L$ on the Bypass architecture. 
For simplicity, we describe $3d$-code-cycle protection of long SC patches, as shown at the bottom of Fig.~\ref{fig:with_and_without_wiring_layer}, using the Stim circuit format.
We assume a circuit-level noise model that applies a depolarizing noise channel after all physical gates and adds measurement errors.
For a given physical error rate (PER), we repetitively sample the error patterns, simulate the propagation of errors and the decoding procedure, and evaluate the probability of logical failure.

The \measzz operation using the Bypass involves inter-layer CNOT operations, as indicated by the red lines in Fig.~\ref{fig:2.5d_qubit_plane}\,(b). 
Although it significantly depends on the implementation, these inter-layer two-qubit gates may exhibit higher PER compared to intra-layer operations.
To evaluate the impact of inter-layer gates, we assume $\pinter$ as the PER of inter-layer operations, while all other operations have a PER of $p$.
We assume several values for $\pinter$ based on the infidelity of inter-chip communication with flip-chip bonding from Refs.~\cite{gold2021entanglement,smith2022scaling}. 
Optimistically, we assume $\pinter = p$. 
For a practical scenario, we assume $\pinter = 5p$, based on the ratio of inter- and intra-chip average PERs reported in Ref.~\cite{gold2021entanglement}.
Pessimistically, we assume $\pinter = 10p$, which is double the practical value.

\subsection{LER evaluation results}
\begin{figure*}[tb]
    \centering
    \includegraphics[width=\linewidth]{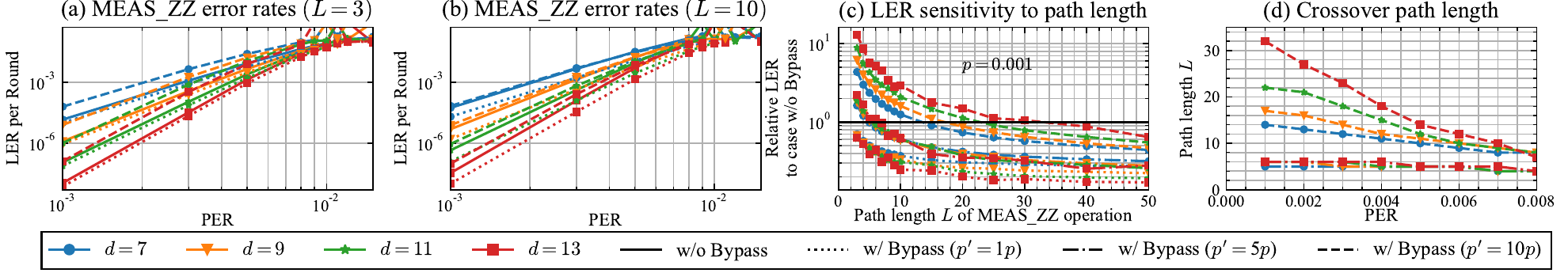}
    \caption{(a), (b) LER of \measzz operations with and without Bypass layer. 
    (c) Relative LER of bypassing \measzz relative to that without Bypass layer and its sensitivity to $L$. (d) Crossover path length $L$ where the Bypass layer improves LER.}
    \label{fig:logical_error_rate_and_relative}
\end{figure*}

Figures~\ref{fig:logical_error_rate_and_relative}\,(a) and (b) compare the LER of \measzz operations with and without a Bypass layer, as shown in Fig.~\ref{fig:with_and_without_wiring_layer}.
In the case of (a) $L=3$, the optimistic scenario with the Bypass layer (dotted lines) shows almost the same LER as that without the Bypass layer (solid lines).
By contrast, the pessimistic scenario (dashed lines) has a degraded performance due to the inter-layer operation compared to the solid lines.
In the case of (b) $L=10$, the optimistic scenario with the Bypass layer achieves lower LERs than those without Bypass.
For the pessimistic scenario with the Bypass layer, the performance degradation is moderate.

Figure~\ref{fig:logical_error_rate_and_relative}\,(c) shows the path length sensitivity of the LER of \measzz operations using a Bypass layer relative to that without a Bypass layer. 
Here, the PER $p$ is fixed to 0.001.
In the figure, values on the vertical axis below 1 indicate that using a Bypass layer results in a lower LER compared to that without a Bypass layer. 
The results show that the use of the Bypass layer is beneficial in terms of LER for longer \measzz operations even in the pessimistic scenario. 

Figure~\ref{fig:logical_error_rate_and_relative}\,(d) shows the minimum path length $L$ of \measzz operation where the use of the Bypass layer becomes advantageous in terms of LER, given a specific PER and code distance.
As the code distance decreases or the PER increases, the required $L$ for the Bypass layer to be advantageous also decreases.
In the evaluated parameter combinations, even in the pessimistic case, the Bypass layer is advantageous for \measzz with $L$ of 31 or more.

\section{FTQC performance evaluation\label{sec:CBPI_evaluation}}
\subsection{QPE as benchmark program\label{subsec:benchmark}}

\begin{table}[tb]
\tabcolsep=3.5pt
\centering
\caption{QPE programs used as benchmarks in this paper.\label{tab:benchmarks}}
\scriptsize
\begin{tabular}{|c|c|c|c|c|c|c|c|} \hline
Hamiltonian         & \multicolumn{4}{c|}{2D Fermi-Hubbard (FH)~\cite{yoshioka2022hunting}} & \textcolor{black}{3D Jellium\cite{kivlichan2020improved}}  &  \multicolumn{2}{c|}{\textcolor{black}{$H_4$~\cite{lee2021even}}}  \\ \hline
$N_{size}$          & 32     & 72     & 128    & 200          & \textcolor{black}{4}      &   \textcolor{black}{8}         &  \textcolor{black}{18}\\ 
\# data cells       & 268    & 340    & 428    & 532          & \textcolor{black}{316}      &   \textcolor{black}{244}       &  \textcolor{black}{382}\\
Total ops.          & 6428   & 14810  & 29000  & 50198        & \textcolor{black}{25448}      &   \textcolor{black}{6088}      &  \textcolor{black}{105314}\\ \hline
\textcolor{black}{1-qubit ops.} & \textcolor{black}{3460} & \textcolor{black}{8348} & \textcolor{black}{16772} & \textcolor{black}{29500} & \textcolor{black}{14516} & \textcolor{black}{3236} & \textcolor{black}{61790} \\ 
\textcolor{black}{2-qubit ops.} & \textcolor{black}{2968} & \textcolor{black}{6462} & \textcolor{black}{12228} & \textcolor{black}{20698} & \textcolor{black}{10932} & \textcolor{black}{2852} & \textcolor{black}{43524} \\ 
\textcolor{black}{Magic ops.} & \textcolor{black}{928} & \textcolor{black}{1760} & \textcolor{black}{3040} & \textcolor{black}{4768} & \textcolor{black}{2848} & \textcolor{black}{928} & \textcolor{black}{10016} \\ \hline
\end{tabular}

\end{table}

QPE is a quantum algorithm designed to estimate the ground state energy of a given Hamiltonian and is expected to demonstrate practical quantum advantages at an early stage, attracting attention across various research fields~\cite{reiher2017elucidating,yoshioka2022hunting,babbush2018encoding,kivlichan2020improved,lee2021even}.
In particular, Ref.~\cite{yoshioka2022hunting} focuses on QPE based on qubitization~\cite{low2019hamiltonian} for applications in condensed matter physics, including detailed resource estimation and the acceleration of its major subroutine, called the \texttt{SELECT} circuit. 
Figure~S11 of Ref.~\cite{yoshioka2022hunting} shows that over 80\% of QPE execution time is consumed by the \texttt{SELECT}. 
The \texttt{SELECT} circuit consists of a structure with multi-controlled gates, which can be decomposed with Toffoli gates and frequently appear in many quantum algorithms.
In addition, since quantum singular value transformation, an extension of qubitization, is known to include many other quantum algorithms, optimizing qubitization can lead to the acceleration of a wide range of quantum applications\cite{martyn2021grand}.
Thus, \texttt{SELECT} is well-suited as a benchmark because of its versatility for various FTQC applications and its significant contribution to execution time.

For the benchmark programs, we use \texttt{SELECT} circuits for Fermi-Hubbard (FH) and Jellium models, which are commonly used for condensed matter physics~\cite{yoshioka2022hunting,kivlichan2020improved}. 
For chemistry applications, we use \texttt{SELECT} circuits for $H_4$ molecule, where the Hydrogen atoms are located on $2 \times 2$ grid with a distance of 1.45 Angstrom, with the basis cc-pVDZ, similar to the resource estimation in Ref.~\cite{lee2021even}. 
Table~\ref{tab:benchmarks} summarizes these circuit characterizations. 
We mainly focus on the FH model with a problem size $N_{size}$ of 200, which represents the lattice size of the Hamiltonian. 
Note that QPE problems using the FH model with $N_{size} \geq 72$ are expected to demonstrate quantum advantages, as shown in Fig.~5 of Ref.~\cite{yoshioka2022hunting}. 
We utilize the parallelization technique of the \texttt{SELECT} circuit in Ref.~\cite{yoshioka2022hunting} with 16 threads for all simulations. 
Unless otherwise specified, the code distance $d$ is assumed to be 25, as in Ref.~\cite{yoshioka2022hunting}.

\subsection{Simulation setup\label{subsec:setup_LS_simulation}}
We evaluate the FTQC performance on the Bypass architecture using a cycle-accurate LS simulator.
Table~\ref{tab:parameters_for_LS_eval} summarizes the parameters used in the simulation. 
The underlined values in the table are used for the simulation in Fig.~\ref{fig:result_CBPI_stack}.
The bold values are the initial values for the sensitivity experiment in Fig.~\ref{fig:CPI_sensitivity_QPE} and also used for the evaluation in Figs.~\ref{fig:CBPI_preliminary} and \ref{fig:result_CBPI_various_application}.

\begin{table}[tb]
\caption{LS simulation parameters.\label{tab:parameters_for_LS_eval}}
\scriptsize
\centering
\begin{tabular}{|c|l|l|} \hline
Params.      & Description                  & Values                                   \\ \hline
Arch.        & Qubit layout                 & \underline{\textbf{1L-D}}, \underline{2L-DD}, \underline{2L-DP}, \underline{Bypass}   \\ 
$R_{data}$   & Data cell ratio (\%)       & 25, \underline{44}, \underline{\textbf{50}}                      \\ 
$n_{F}$      & Number of MSD circuits       & \underline{4}, \underline{8}, \underline{\textbf{12}}, 16                    \\
$TP_{dec}$   & Decoding throughput per cell & \underline{0.4}, \underline{\textbf{0.45}}, \underline{0.5}, 0.6, 0.8, 1.0   \\ \hline
\end{tabular}
\end{table}

Our simulation uses greedy instruction scheduling, where previously executed instructions occupy the required cells, and any executable instructions are always executed within a code beat. 
Each LS operation chooses the shortest available path.
Our greedy instruction scheduling avoids the situation where multiple operations attempt to occupy the same cells simultaneously.

\begin{figure}[tb]
    \centering
    \includegraphics[width=\linewidth]{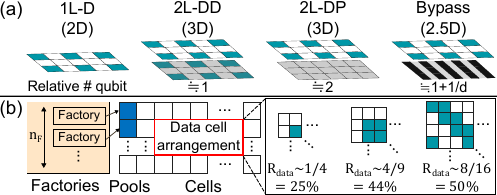}
    \caption{(a) Qubit layouts. (b) Floor plan and data cell arrangements with $R_{data} = 25\%$\cite{beverland2022surface}, $44\%$\cite{chamberland2022universal}, and $50\%$ (new).}
    \label{fig:overview_qubit_layouts}
\end{figure}

Our simulation compares the performance and hardware resources of four different qubit layouts: 1L, 2L-DD, 2L-DP, and Bypass, as shown in Fig.~\ref{fig:overview_qubit_layouts}\,(a).
The 1L-D layout consists of a single Logic layer with data cells. 
The 2L-DD and 2L-DP layouts consist of two Logic layers; 2L-DD allocates data cells in both layers with the same $R_{data}$, while 2L-DP allocates data cells in one layer with $R_{data}$ and uses the other as a pathway for LS operations.
The Bypass layout consists of one logic layer and one Bypass layer, enabling horizontal LS paths to be efficiently connected through it. 
For 2L-DD, 2L-DP, and Bypass layouts, the approximate number of total physical qubits relative to that of the 1L-D layout with the same $R_{data}$, is shown at the bottom of Fig.~\ref{fig:overview_qubit_layouts}\,(a).
Although we simulate the 2L-DD and 2L-DP layouts, note that their implementation feasibility is not addressed in Sec.~\ref{subsec:HW_implementation}.

Our Bypass architecture enables the execution of LS instructions with effectively shorter paths because of two key factors: 1) the availability of shorter path options via 3D-stacked layers, and 2) the sparsity of the Bypass layer, as explained in Sec.~\ref{subsubsec:3d_layer_construction} and \ref{subsubsec:shorter_path}, respectively.
By comparing 1L-D and 2L-DP (2L-DP and Bypass) layouts, the impact of Factor 1 (Factor 2) on $L'$ is estimated.
In addition, comparing 1L-D and 2L-DD estimates the impact of increasing the dimensionality of the qubit layout while maintaining the data cell density on $L'$.
Thus, the difference between the results of 1L-D and Bypass shows a comparison of the impact of the dimensional increase and the two Factors on $L'$.

In our simulation, as shown in the floor plan in Fig.~\ref{fig:overview_qubit_layouts}\,(b), the qubit plane is divided into three parts: Factories, Pools, and Cells.
The Factories part contains $n_F$ MSD circuits, each of which has a corresponding magic-state pool in the Pools part.
Data cells are allocated in the Cells part with variations in the data cell ratio $R_{data}$ from $25\%$ to $50\%$, as shown on the right side of Fig.~\ref{fig:overview_qubit_layouts}\,(b). 
In these data cell arrangements, the instructions described in Sec.~\ref{subsec:instruction_set} can always be executed on any data cell as long as hazards do not occur. 
We call such arrangements \textit{immediate operation (IO) capable}, and $R_{data}$ for IO-capable arrangements is at most 50\%, as proven in App.~\ref{sec:allocation}.
Thus, the arrangement with $R_{data}=50\%$, which we newly found in this paper, represents one of the densest IO-capable arrangements.

The positions of data cells remain fixed during computation, and each logical qubit is randomly assigned to a data cell at the start of the program.
To evaluate the impact of logical qubit assignment on performance, each simulation is repeated 1000 times with different random assignments.
The average result is used for evaluation, and the standard deviation is depicted as an error bar on a CBPI stack.

Typically, data cells are arranged by repeating the pattern shown on the right side of Fig.~\ref{fig:overview_qubit_layouts}\,(b) an equal number of times vertically and horizontally to form a square-shaped Data cells area, which we refer to as the \textit{square arrangement}. 
For the Bypass layout, we consider a horizontally elongated arrangement of the Data cell area to leverage its characteristics, referred to as the \textit{wide arrangement}, although this slightly increases the total number of cells compared to the square arrangement.
In wide arrangements, the height of Data cell area is determined such that the average effective path length between any two data cells is minimized for a given number of data cells, as explained in Sec.~\ref{subsec:path_length_evaluation}. 
However, if this height is smaller than $n_F$, it is set to $n_F$.

\begin{figure}[tb]
    \centering
    \includegraphics[width=\linewidth]{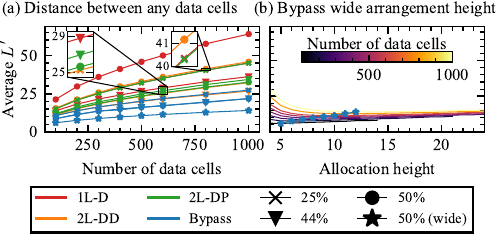}
    \caption{(a) Average $L'$ between any data cells. (b) Allocation height of Bypass with wide arrangement and average $L'$.}
    \label{fig:average_length}
\end{figure}

Based on the two-level 15-to-1 distillation protocol proposed in Ref.~\cite{litinski2019game}, we assume each MSD circuit generates a single magic state per 15 code beats. 
Magic states generated by each MSD circuit are stocked in the corresponding magic pool and consumed by non-Clifford operations via gate teleportation.
Note that the generation rate per MSD circuit may be improved by using more efficient strategies, such as the constructions proposed in Refs.~\cite{litinski2019magic,hirano2024leveraging,gidney2024magic}.

During the LS simulation, the decoding tasks associated with each operation are considered as follows.
As shown in Fig.~\ref{fig:hazards}\,(d), whenever an instruction is executed, a decoding task associated with it is added to the decoding task queue.
Note that the decoding process is also required to protect data cells (\nopgate operations).
We assume that the difficulty of a decoding task for an operation is proportional to both the number of data qubits involved in the operation and its duration.
The decoder processes tasks sequentially from the top of the queue, provided that sufficient decoding capacity remains per code beat. 
We assume that the decoding capacity is proportional to the total number of qubits in the system. 
The parameter $TP_{dec}$ represents the decoding capacity per code beat per cell required for adequate online decoding.
Specifically, $TP_{dec} = 0.5 (1.0)$ represents an FTQC system with sufficient decoding resources to execute online decoding for half (all) of the cells. 
Note that the decoding task difficulty for a single cell in the Bypass layer is $1/d$ of that for a single cell in the Logic layer.
For simplicity, decoding tasks required for Factories are not considered.

\subsection{LS path length and program fidelity\label{subsec:path_length_evaluation}}

\begin{figure}[tb]
    \centering
    \includegraphics[width=\linewidth]{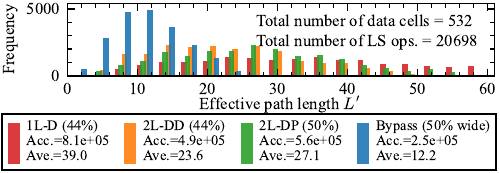}
    \caption{Histogram of $L'$ for QPE program with FH (200). }
    \label{fig:path_length_histogram}
\end{figure}

First, Fig.~\ref{fig:average_length}\,(a) shows the average effective distance $L'$ between any two data cells for each qubit layout and data cell arrangement across various numbers of data cells. 
The 1L-D and 2L-DD layouts exhibit similar trends: for $R_{data}$ ranging from 25\% to 44\%, the average $L'$ decreases as the total number of cells decreases with higher $R_{data}$. However, at $R_{data} = 50\%$, roundabout paths significantly increase the average $L'$.
In contrast, the 2L-DP layout utilizes its pathway layer to avoid roundabout paths, leading to a reduction in $L'$ as $R_{data}$ increases.
For sparse data cell arrangements with $R_{data} = 25\%$, where no roundabout paths exist even in the 1L-D layout, the average $L'$ is nearly identical to that of the 2L-DP layout (see the red and green crosses in the zoomed-in region on the right). 

The Bypass layout, compared to other layouts, effectively reduces the average $L'$, particularly in high-density data cell arrangements. 
Moreover, the advantage of the Bypass layout over the other three layouts scales with the number of data cells. 
This trend is particularly pronounced in the wide arrangement with $R_{data} = 50\%$ (blue line with stars).
Figure~\ref{fig:average_length}\,(b) shows the relationship between the height of the wide arrangement and the average $L'$ for the Bypass layout with $R_{data} = 50\%$.
For each data cell number, we choose the height that minimizes average $L'$ to generate the plot of Fig.~\ref{fig:average_length}\,(a).

Next, Fig.~\ref{fig:path_length_histogram} presents histograms of $L'$ for QPE with the FH model ($N_{size} = 200$) across the four qubit layouts.
We choose $R_{data}$ for each layout that minimizes average $L'$ in Fig.~\ref{fig:average_length}\,(a).
Other simulation parameters are set to the bolded values in Tab.~\ref{tab:parameters_for_LS_eval}.
As indicated in the legend of Fig.~\ref{fig:path_length_histogram}, the total $L'$ in the Bypass layout is approximately 30\% (half) of that in the 1L-D (2L-DD) layout.

The code distance $d$ is determined based on the PER $p$, the SC threshold $p_{th}$, and the LER $p_L$ required for the FTQC program, as described by the following formula: $p_L \approx \text{const} \times (p/p_{th})^{(d-1)/2}$.
As explained in Sec.~\ref{subsubsec:shorter_path}, the LER of the entire FTQC program is proportional to the total path length of all LS operations.
Thus, if the total $L'$ of the FTQC program is reduced by a factor of $1/X$, the $p_L$ of the overall program also decreases by a factor of $1/X$, allowing the code distance $d$ to be reduced according to the value of $p/p_{th}$.
As a result, assuming an SC threshold $p_{th}$ of 0.01 and a PER $p$ of approximately 0.003, the Bypass architecture enables a reduction in the code distance $d$ by 2 compared to the 1L-D, because of the total $L'$ results in Fig.~\ref{fig:path_length_histogram}. 
The impact of reducing $d$ on the FTQC performance and scalability is discussed in Sec.~\ref{subsec:performance_resource_tradeoff}.

\subsection{CBPI evaluation results}
\begin{figure}[tb]
    \centering
    \includegraphics[width=\linewidth]{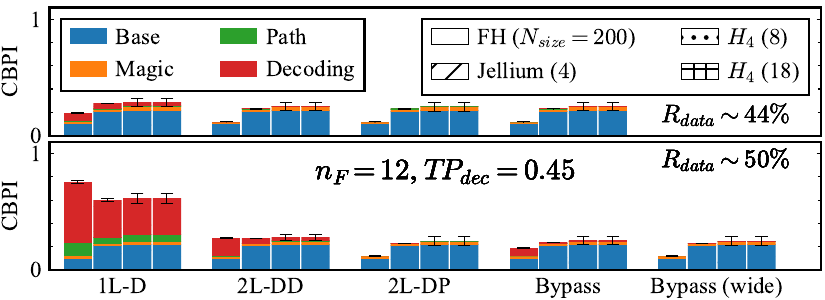}
    \caption{CBPI stacks for various QPEs.}
    \label{fig:result_CBPI_various_application}
\end{figure}
\begin{figure}[tb]
    \centering
    \includegraphics[width=\linewidth]{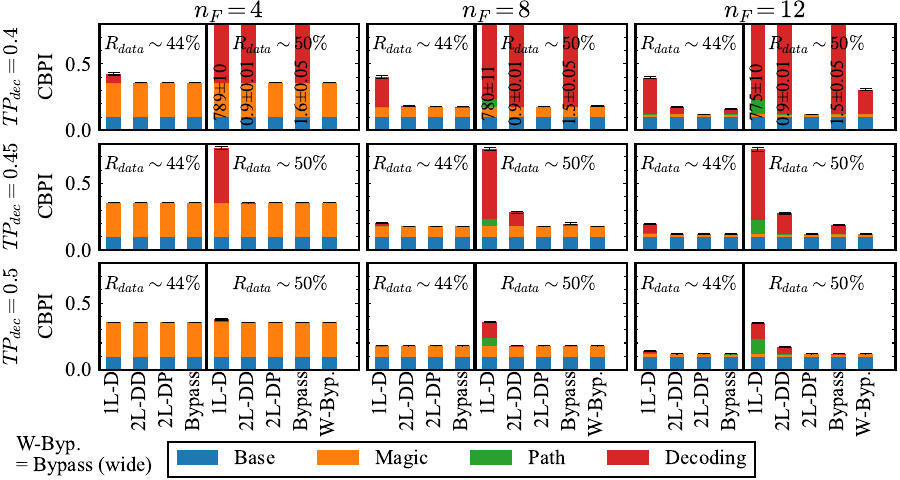}
    \caption{CBPI stacks of QPE FH model with $N_{size}=200$.}
    \label{fig:result_CBPI_stack}
\end{figure}

Figure~\ref{fig:result_CBPI_various_application} presents performance evaluation results of various QPE applications using the bolded parameters in Tab.~\ref{tab:parameters_for_LS_eval}, except for $R_{data}$.
The top and bottom plots show the results for $R_{data}=44\%$ and $50\%$, respectively.
For $R_{data}=44\%$, the path and decoding penalties slightly degrade the performance of the 1L-D layout but are completely mitigated by the other three layouts across all applications.
When $R_{data}=50\%$, the impact of these penalties increases in the 1L-D layout.
However, the 2L-DP layout and the Bypass with wide arrangement successfully eliminate them.

Since Fig.~\ref{fig:result_CBPI_various_application} shows a similar trend across all applications, we focus on the FH model for a more detailed evaluation. 
Figure~\ref{fig:result_CBPI_stack} presents the performance evaluation results for various parameter combinations of the underlined values in Tab.~\ref{tab:parameters_for_LS_eval}.
The results for parameters not underlined are omitted, as their selection had little to no impact on performance.
For example, increasing $TP_{dec}$ greater than 0.5 only eliminates the small impact of decoding hazards on CBPI in the cases with $R_{data} = 50\%$ in the figure.
Similarly, increasing $n_{F}$ from 12 to 16 only reduces the minor impact of magic hazards in any case.
For the cases with $R_{data} = 25\%$, decoding and path hazards have no impact on the CBPI.

For the cases of $TP_{dec} = 0.4$ (top row of the figure), increasing $R_{data}$ from 44\% to 50\% significantly reduces performance due to the decoding penalty, except in the 2L-DP layout and the Bypass with wide arrangement.
Note that the decoding resource is assumed to be proportional to the number of qubits in the layouts. 
The 2L-DP layout has nearly twice the decoding resources of the other three layouts for the same $TP_{dec}$ and $R_{data}$.
As the decoding resources increase (middle and bottom rows), the impact of the decoding penalty is significantly reduced in all cases.
At $TP_{dec} = 0.45$ (middle row), sufficient decoding throughput is achieved to mitigate the decoding penalty for the Bypass layout with $R_{data} = 50\%$ wide arrangement.
By contrast, the penalty persists in the 1L-D and 2L-DD layouts with the dense data cell arrangements.

The path penalty affects the performance only on the 1L-D layout with $R_{data}=50\%$ for the QPE programs. 
The other three layouts completely mitigate the hazards because of their 3D structures.

In cases with a limited number of MSDs, such as $n_F = 4$, the performance improvement with the Bypass layout is small, except when $TP_{dec} = 0.4$.
This occurs because the slow supply of magic states limits the number of LS instructions that can be executed simultaneously, reducing the chance of path and decoding hazards. 
As $n_F$ increases, these hazards degrade performance in the 1L-D and 2L-DD layouts, while the 2L-DP and Bypass mitigate them.

\begin{figure}[tb]
    \centering
    \includegraphics[width=\linewidth]{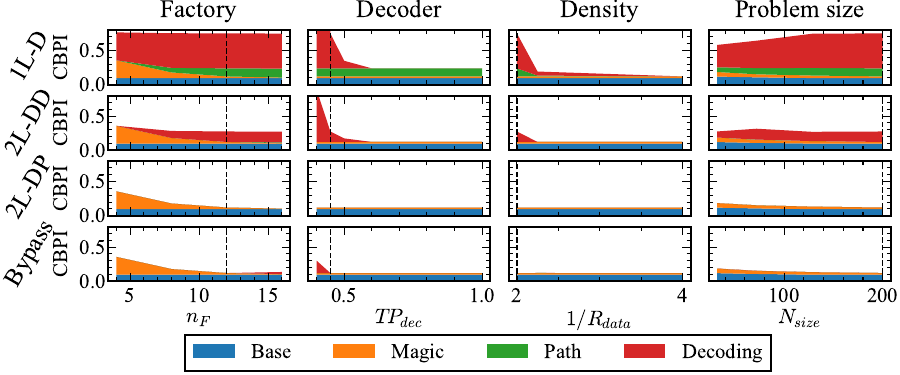}
    \caption{CBPI sensitivity results for QPE with FH models.}
    \label{fig:CPI_sensitivity_QPE}
\end{figure}

\begin{figure*}[tb]
    \centering
    \includegraphics[width=\linewidth]{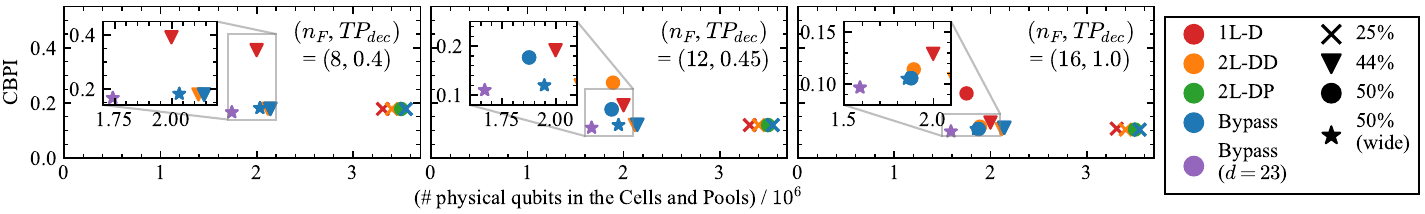}
    \caption{Tradeoff between CBPI and the number of physical qubits in the Cells and Pools parts.}
    \label{fig:CPI_qubits_tradeoff}
\end{figure*}

In all cases, the error bars, which represent the impact of logical qubit assignment on performance, in the Bypass layout with a wide arrangement are smaller than those for the other layouts, suggesting that the Bypass layout may reduce the burden of placement optimization in the LS compilation process, thereby simplifying the overall workflow and improving the efficiency of LS-based FTQC.

Next, Fig.~\ref{fig:CPI_sensitivity_QPE} summarizes the CBPI sensitivity to the number of factories, decoder resources, data cell arrangements, and problem size of QPE using various combinations of the values in Tab.~\ref{tab:parameters_for_LS_eval}.
One parameter is varied while the others are fixed to the bolded values in the table, represented as dashed lines in the figure.
In each plot, the changes from left to right on the graph represent design modifications that increase the required hardware resources.
Note that the Bypass layout with $R_{data} = 50\%$ uses wide arrangements, while the others use square arrangements.

The Factory column shows that increasing $n_{F}$ may not improve FTQC performance due to the impact of path and decoding penalties in the 1L-D and 2L-DD layouts. 
For $n_F = 16$ in the Bypass layout, the decoding penalty slightly increases because the height of the wide arrangement is set to $n_F$, reducing the number of horizontally aligned cells in the Cells part and limiting the opportunity to leverage the characteristics of the Bypass layer.
The Decoder and Density columns suggest that increasing classical and quantum resources can enhance FTQC performance when the magic state generation rate is sufficient for applications.
Meanwhile, the Bypass layout achieves high performance even with limited hardware resources.
In addition, the Problem size column indicates that this trend remains consistent regardless of the problem size.

\subsection{Tradeoff between performance and resource\label{subsec:performance_resource_tradeoff}}

This subsection presents the tradeoff between classical and quantum hardware resources and FTQC performance. 
Each graph in Fig.~\ref{fig:CPI_qubits_tradeoff} illustrates the various layouts, with the number of physical qubits in the Cells and Pools on the horizontal axis and CBPI on the vertical axis.
Note that the simulation assumes decoding resources to be proportional to the number of physical qubits.
Thus, the horizontal axis of each graph represents both the quantum hardware resources and the classical decoding resources.

As representative configurations, three cases are selected: limited resources ($n_F = 8,\ TP_{dec} = 0.4$), moderate resources ($12, 0.45$), and rich resources ($16, 1.0$), as shown in Fig.~\ref{fig:CPI_qubits_tradeoff}.
Particularly, we focus on comparing the 1L-D layout with $R_{data} = 44\%$ data cell arrangement and the Bypass layout with $R_{data} = 50\%$ wide arrangement. 
These are referred to as the base and proposed layouts, represented by the red triangles and blue stars in the figure, respectively.

In the limited resources case, the proposed layout achieves a 2.16$\times$ speedup with only 1.7\% additional hardware resources compared to the base layout.
In the moderate (rich) resources case, the proposed layout breaks the tradeoff between resources and performance, achieving both a 1.60$\times$ (1.23$\times$) speedup and a 2.6\% (7.2\%) reduction in hardware resources compared to the base.
In addition, the proposed layout achieves either higher performance, reduced hardware resources, or both, compared to any other layouts with 3D-stacked layers (orange and green markers) in all cases.

As discussed in Sec.~\ref{subsec:path_length_evaluation}, the Bypass architecture contributes to reducing the code distance by shortening $L'$ for LS operations.
The hardware resources and performance of the proposed layout, with the code distance reduced to 23, are represented as the purple stars in Fig.~\ref{fig:CPI_qubits_tradeoff}.
For the limited, moderate, and rich cases, the proposed layout with $d=23$ achieves 2.35$\times$, 1.73$\times$, and 1.33$\times$ speedup and 13\%, 17\%, and 21\% reduction in hardware resources compared to the base, respectively, demonstrating the further potential of the Bypass architecture for high-performance and scalable FTQC.

\section{Related work}
\noindent   
\textbf{FTQC resource estimation:} The depth of non-Clifford gates, such as $T$-gates, has often been used as a metric for FTQC resource estimation because each application of $T$-gates requires a time-consuming procedure consisting of magic-state injection and distillation in typical FTQC schemes, including LS~\cite{gidney2021rsa,tannu2017taming,ding2018magic}.
While the $T$-count can capture the time-scaling of given programs when magic-state preparations are the most time-consuming factors in large-scale FTQC, this estimation omits several vital factors in time analysis. 
As a result, the actual execution time can vary by a few orders of magnitude by the FTQC architecture and compilation.

\noindent   
\textbf{LS Compilation schemes:} To achieve high-throughput LS operations, optimal compilation, including data cell arrangement and LS instruction scheduling, is important; however, finding optimal LS compilation is known to be NP-hard~\cite{herr2017optimization,molavi2023compilation}.
Thus, fast and near-optimal approximation strategies for scheduling are well-studied. 
One possible strategy is to map the compilation problems into well-known NP-hard instances; Lao \textit{et al.}~\cite{lao2018mapping} map the problems into the quadratic assignment problem, and Molavi \textit{et al.}~\cite{molavi2023compilation} into SAT problems. 
As another approach, Hamada \textit{et al.}~\cite{hamada2024efficient} proposed a new scheduling method leveraging a technique to split large LS instructions into several smaller ones, such as Bell state preparation and measurements, and executing a part of them in advance.

\noindent
\textbf{Novel qubit layouts for LS-based FTQC:} Regarding architectural approaches for high-performance LS-based computation, Viszlai~\textit{et al.}~\cite{viszlai2023architecture} proposed a 3D LS architecture with multiple effective qubit layers leveraging the degrees of freedom in manipulating neutral atoms.
Their architecture achieves a throughput improvement by reducing path conflicts during LS operations and implementing transversal CNOT gates. 
Duckering \textit{et al.}~\cite{duckering2020virtualized} proposed a novel FTQC architecture named virtualized logical qubits, which supports transversal logical-CNOT gates and achieves hardware minimization through the combination of transmon and cavity qubits. 
Note that these approaches do not conﬂict with our proposal, and they are expected to be even more efﬁcient when combined.

\section{Conclusion\label{sec:conclusion}}
In this paper, we proposed a performance evaluation methodology named CBPI stack that breaks down the impact of each hazard on LS-based FTQC.
Based on the bottleneck analysis with the CBPI stack, we proposed the Bypass architecture to achieve high-performance and scalable LS-based FTQC by suppressing LS path length. 
In our simulation, the Bypass architecture achieves both a 1.73$\times$ speedup and a 17\% reduction in classical and quantum hardware resources compared to the 2D baseline in the moderate resources case, demonstrating its potential for high-performance and scalable FTQC.

\section*{Acknowledgement}
This work was partly supported by MEXT Q-LEAP Grant Numbers JPMXS0120319794 and JPMXS0118068682, JST Moonshot R\&D, Grant Numbers JPMJMS2061, JPMJMS2067-20 and JPMJMS226C-107, JST CREST Grant Numbers JPMJCR23I4 and JPMJCR24I4, JSPS KAKENHI Grant Numbers 22H05000, 22K17868, and 24K02915, RIKEN Special Postdoctoral Researcher Program.


\appendix

\section{Dense data cell arrangement\label{sec:allocation}}

This section introduces a novel data cell arrangement to achieve an LS-based FTQC architecture with minimal hardware resources.
For simplicity, this paper focuses on the data cell arrangement where the instructions in the instruction set described in Sec.~\ref{subsec:instruction_set} can always be executed on any data cell, as long as hazards do not occur.
We call such arrangements as \textit{Immediate operation (IO) capable}.
The necessary conditions for IO-capable data cell arrangements are as follows:
\begin{enumerate}
    \renewcommand{\theenumi}{(\Roman{enumi})}
    \renewcommand{\labelenumi}{\theenumi}
    \item Each data cell has at least one ancillary cell to its left and right, and one above and below. \label{cond1}
    \item Any two ancillary cells have a path connecting them through ancillary cells. \label{cond2}
\end{enumerate}

The arrangements in Fig.~\ref{fig:datacell_allocation}\,(a), including the novel one that asymptotically achieves an $R_{data}$ of $50\%$, satisfy the conditions above.
By contrast, the arrangement in Fig.~\ref{fig:datacell_allocation}\,(b)~\cite{lee2021even} violates condition~\ref{cond1}.
As a result, it may be necessary to execute data cell rotation or movement protocols, which consume several code beats~\cite{litinski2019game}, before performing certain instructions, leading to performance degradation.

Here, the $R_{data}$ for any IO-capable data cell arrangement is at most 50\%, as proven in Thm.~\ref{theo:r_data}.
Thus, the new arrangement is one of the densest arrangements.

\begin{theorem}
    $R_{data}$ of any IO-capable arrangements is at most $50\%$. \label{theo:r_data} 
\end{theorem}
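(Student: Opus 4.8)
The plan is to set up a counting (averaging) argument on the 2D grid of cells, where each cell is either a data cell or an ancillary cell, and the goal is to show that at least half of all cells must be ancillary cells. The key structural fact is condition~\ref{cond1}: every data cell must have an ancillary cell immediately above, below, to its left, and to its right. I would exploit this local adjacency constraint to build an injective charging scheme (or a fractional/averaging argument) from data cells to ancillary cells.

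First I would formalize the setup: view the qubit plane as an infinite (or large toroidal/bounded) grid of unit cells indexed by integer coordinates, each labeled $D$ (data) or $A$ (ancillary), subject to condition~\ref{cond1}. The cleanest approach is a discharging argument: assign each data cell a charge of $1$, and have it distribute this charge equally among its four orthogonal neighbors, sending $\tfrac{1}{4}$ to each. By condition~\ref{cond1}, every neighbor that receives charge is guaranteed to be either present and, more to the point, each data cell has all four neighbors available as cells. The crucial observation is that an ancillary cell can receive charge only from data-cell neighbors, and it has at most four neighbors, so it receives at most $4 \times \tfrac{1}{4} = 1$ unit of charge. Since total charge is conserved, the number of data cells (total charge sent) equals the total charge received, which is at most the number of ancillary cells times $1$. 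This immediately yields $\#\{\text{data}\} \le \#\{\text{ancillary}\}$, hence $R_{data} \le 50\%$.

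I would then need to handle the subtlety that a data cell's neighbor could in principle be another data cell --- but condition~\ref{cond1} forbids exactly this in the four orthogonal directions, since it requires an ancillary cell in each of the four directions. So all four units of $\tfrac14$-charge from any data cell land on genuine ancillary cells, confirming that no charge is ever sent to a data cell and the bound is tight. The tightness (the $50\%$ arrangement) follows because the novel arrangement in Fig.~\ref{fig:datacell_allocation}\,(a) achieves the bound asymptotically, so the inequality cannot be improved.

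The main obstacle I anticipate is boundary effects: on a finite grid, data cells near the edge have fewer than four neighbors, and ancillary cells near the edge receive charge from fewer cells, so the clean conservation identity acquires correction terms. For a rigorous finite statement I would either (i) argue that the bound $R_{data}\le 50\%$ is an asymptotic/limiting statement as the array grows, so boundary contributions are lower-order, or (ii) assume periodic (toroidal) boundary conditions so every cell has exactly four neighbors and the averaging argument is exact. I expect the discharging inequality itself to be routine once the neighbor-availability guarantee from condition~\ref{cond1} is invoked; the care lies entirely in stating the result as an asymptotic density bound so the boundary terms do not interfere, which matches the ``asymptotically achieves'' phrasing already used in the paper.
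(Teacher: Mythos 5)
Your proof hinges on the claim that Condition~\ref{cond1} forces all four orthogonal neighbors of every data cell to be ancillary, i.e., that no two data cells are ever adjacent. That is a misreading, and it is the crux of the gap. The condition (as the paper's own proof makes explicit through the inequality $\sum_i D_i \geq 2N_d$) only guarantees at least one ancillary neighbor on the left--right axis and at least one on the up--down axis --- two guaranteed ancillary neighbors per data cell, not four. This weaker requirement is exactly what immediate operation needs: a free $Z$ boundary (left or right) and a free $X$ boundary (top or bottom). Indeed, the paper's own IO-capable arrangements at $R_{data}=44\%$ and $50\%$ (Fig.~\ref{fig:datacell_allocation}) contain adjacent data cells; under your reading they would not be IO-capable at all, and the theorem's tightness claim would collapse. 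Since the theorem quantifies over \emph{all} IO-capable arrangements, a proof that only covers arrangements with isolated data cells does not prove the statement.

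Once the condition is read correctly, your discharging argument no longer reaches $50\%$: each data cell can only route charge to its (at least) two guaranteed ancillary neighbors, say $1/2$ to each, and each ancillary cell can absorb up to $4 \times 1/2 = 2$, yielding only $N_d \leq 2N_a$, i.e., $R_{data} \leq 2/3$. This is not an artifact of the accounting --- the bound $2/3$ is actually attained by arrangements satisfying Condition~\ref{cond1} alone (e.g., declare a cell at $(x,y)$ ancillary exactly when $x+2y \equiv 0 \pmod{3}$; every data cell then has exactly one horizontal and one vertical ancillary neighbor, but every ancillary cell is isolated). What rules such arrangements out, and what your proposal never invokes, is Condition~\ref{cond2}: connectivity of the ancillary cells. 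The paper's proof spends that condition by charging the ancillary cells' neighbor slots a second time --- a spanning tree of the ancillary set forces $\sum_i A_i \geq 2N_a - 2$ --- and combining this with $\sum_i D_i \geq 2N_d$ and a boundary term $\sum_i E_i \geq 4$ inside the identity $\sum_i (D_i + A_i + E_i) = 4N_a$ gives $N_a \geq N_d + 1$, hence $R_{data} < 1/2$ for any finite region, with no need for toroidal boundary conditions or asymptotic limits. Any correct proof must use the connectivity condition somewhere; yours does not.
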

\begin{proof}
  Let $N_d$ and $N_a$ be the numbers of data and ancillary cells in a certain region of the qubit plane.
  Let $R_{data} = \frac{N_d}{N_a + N_d}$ be the data cell ratio of the region.
  Let $D_i$, $A_i$, and $E_i$ represent the numbers of data cells, ancillary cells, and cells outside the region in the four adjacent cells of the $i$-th ancillary cell in the region, respectively.

  From Condition~\ref{cond1}, we have:
  \begin{equation}
    \sum\nolimits_i^{N_a} D_i \geq 2N_d,\ \text{and }\sum\nolimits_i^{N_a} E_i \geq 4.  \label{eq:di_sum}
  \end{equation}
  
  From Condition~\ref{cond2}, we have:
  \begin{equation}
    \sum\nolimits_i^{N_a} A_i \geq 2N_a - 2.\label{eq:ei_sum}
  \end{equation}

Combining Eqs.~\eqref{eq:di_sum} and \eqref{eq:ei_sum}, we get: 
  \begin{align}
    4N_a &= \sum\nolimits_i^{N_a} (D_i + A_i + E_i) \geq 2N_d + 2N_a + 2 \nonumber\\ 
    &\Leftrightarrow N_a \geq N_d + 1 \Rightarrow R_{data}  < 1/2  \nonumber \qedhere
  \end{align}
\end{proof}

\begin{figure}[tb]
    \centering
    \includegraphics[width=\linewidth]{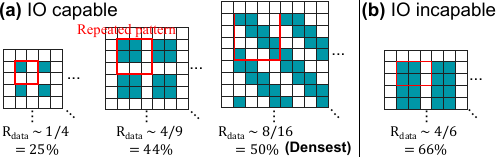}
    \caption{IO-capable and incapable arrangements with data cell density $R_{data} \sim 25\%$\cite{beverland2022surface}, $44\%$\cite{chamberland2022universal}, $50\%$ (densest), and $66\%$\cite{lee2021even}.}
    \label{fig:datacell_allocation}
\end{figure}

\bibliographystyle{ACM-Reference-Format}
\bibliography{refs}


\begin{thebibliography}{50}


\ifx \showCODEN    \undefined \def \showCODEN     #1{\unskip}     \fi
\ifx \showDOI      \undefined \def \showDOI       #1{#1}\fi
\ifx \showISBNx    \undefined \def \showISBNx     #1{\unskip}     \fi
\ifx \showISBNxiii \undefined \def \showISBNxiii  #1{\unskip}     \fi
\ifx \showISSN     \undefined \def \showISSN      #1{\unskip}     \fi
\ifx \showLCCN     \undefined \def \showLCCN      #1{\unskip}     \fi
\ifx \shownote     \undefined \def \shownote      #1{#1}          \fi
\ifx \showarticletitle \undefined \def \showarticletitle #1{#1}   \fi
\ifx \showURL      \undefined \def \showURL       {\relax}        \fi
\providecommand\bibfield[2]{#2}
\providecommand\bibinfo[2]{#2}
\providecommand\natexlab[1]{#1}
\providecommand\showeprint[2][]{arXiv:#2}

\bibitem[Babbush et~al\mbox{.}(2018)]%
        {babbush2018encoding}
\bibfield{author}{\bibinfo{person}{Ryan Babbush}, \bibinfo{person}{Craig
  Gidney}, \bibinfo{person}{Dominic~W Berry}, \bibinfo{person}{Nathan Wiebe},
  \bibinfo{person}{Jarrod McClean}, \bibinfo{person}{Alexandru Paler},
  \bibinfo{person}{Austin Fowler}, {and} \bibinfo{person}{Hartmut Neven}.}
  \bibinfo{year}{2018}\natexlab{}.
\newblock \showarticletitle{Encoding electronic spectra in quantum circuits
  with linear T complexity}.
\newblock \bibinfo{journal}{\emph{Physical Review X}} \bibinfo{volume}{8},
  \bibinfo{number}{4} (\bibinfo{year}{2018}), \bibinfo{pages}{041015}.
\newblock


\bibitem[Battistel et~al\mbox{.}(2023)]%
        {battistel2023real}
\bibfield{author}{\bibinfo{person}{Francesco Battistel},
  \bibinfo{person}{Christopher Chamberland}, \bibinfo{person}{Kauser Johar},
  \bibinfo{person}{Ramon~WJ Overwater}, \bibinfo{person}{Fabio Sebastiano},
  \bibinfo{person}{Luka Skoric}, \bibinfo{person}{Yosuke Ueno}, {and}
  \bibinfo{person}{Muhammad Usman}.} \bibinfo{year}{2023}\natexlab{}.
\newblock \showarticletitle{Real-time decoding for fault-tolerant quantum
  computing: progress, challenges and outlook}.
\newblock \bibinfo{journal}{\emph{Nano Futures}} \bibinfo{volume}{7},
  \bibinfo{number}{3} (\bibinfo{date}{aug} \bibinfo{year}{2023}),
  \bibinfo{pages}{032003}.
\newblock
\urldef\tempurl%
\url{https://doi.org/10.1088/2399-1984/aceba6}
\showDOI{\tempurl}


\bibitem[Beverland et~al\mbox{.}(2022a)]%
        {beverland2022surface}
\bibfield{author}{\bibinfo{person}{Michael Beverland}, \bibinfo{person}{Vadym
  Kliuchnikov}, {and} \bibinfo{person}{Eddie Schoute}.}
  \bibinfo{year}{2022}\natexlab{a}.
\newblock \showarticletitle{Surface code compilation via edge-disjoint paths}.
\newblock \bibinfo{journal}{\emph{PRX Quantum}} \bibinfo{volume}{3},
  \bibinfo{number}{2} (\bibinfo{year}{2022}), \bibinfo{pages}{020342}.
\newblock


\bibitem[Beverland et~al\mbox{.}(2022b)]%
        {beverland2022assessing}
\bibfield{author}{\bibinfo{person}{Michael~E Beverland},
  \bibinfo{person}{Prakash Murali}, \bibinfo{person}{Matthias Troyer},
  \bibinfo{person}{Krysta~M Svore}, \bibinfo{person}{Torsten Hoefler},
  \bibinfo{person}{Vadym Kliuchnikov}, \bibinfo{person}{Guang~Hao Low},
  \bibinfo{person}{Mathias Soeken}, \bibinfo{person}{Aarthi Sundaram}, {and}
  \bibinfo{person}{Alexander Vaschillo}.} \bibinfo{year}{2022}\natexlab{b}.
\newblock \showarticletitle{Assessing requirements to scale to practical
  quantum advantage (2022)}.
\newblock \bibinfo{journal}{\emph{arXiv preprint arXiv:2211.07629}}
  (\bibinfo{year}{2022}).
\newblock


\bibitem[Bravyi et~al\mbox{.}(2024)]%
        {bravyi2024high}
\bibfield{author}{\bibinfo{person}{Sergey Bravyi}, \bibinfo{person}{Andrew~W
  Cross}, \bibinfo{person}{Jay~M Gambetta}, \bibinfo{person}{Dmitri Maslov},
  \bibinfo{person}{Patrick Rall}, {and} \bibinfo{person}{Theodore~J Yoder}.}
  \bibinfo{year}{2024}\natexlab{}.
\newblock \showarticletitle{High-threshold and low-overhead fault-tolerant
  quantum memory}.
\newblock \bibinfo{journal}{\emph{Nature}} \bibinfo{volume}{627},
  \bibinfo{number}{8005} (\bibinfo{year}{2024}), \bibinfo{pages}{778--782}.
\newblock


\bibitem[Bravyi and Kitaev(2005)]%
        {bravyi_magic_state}
\bibfield{author}{\bibinfo{person}{Sergey Bravyi} {and} \bibinfo{person}{Alexei
  Kitaev}.} \bibinfo{year}{2005}\natexlab{}.
\newblock \showarticletitle{Universal quantum computation with ideal Clifford
  gates and noisy ancillas}.
\newblock \bibinfo{journal}{\emph{Phys. Rev. A}}  \bibinfo{volume}{71}
  (\bibinfo{date}{Feb} \bibinfo{year}{2005}), \bibinfo{pages}{022316}.
\newblock
Issue 2.
\urldef\tempurl%
\url{https://doi.org/10.1103/PhysRevA.71.022316}
\showDOI{\tempurl}


\bibitem[Bravyi and Kitaev(1998)]%
        {bravyi1998quantum}
\bibfield{author}{\bibinfo{person}{Sergey~B. Bravyi} {and}
  \bibinfo{person}{Alexei~Yu. Kitaev}.} \bibinfo{year}{1998}\natexlab{}.
\newblock \showarticletitle{Quantum codes on a lattice with boundary}.
\newblock \bibinfo{journal}{\emph{arXiv preprint quant-ph/9811052}}
  (\bibinfo{year}{1998}).
\newblock


\bibitem[Byun et~al\mbox{.}(2022)]%
        {byun2022xqsim}
\bibfield{author}{\bibinfo{person}{Ilkwon Byun}, \bibinfo{person}{Junpyo Kim},
  \bibinfo{person}{Dongmoon Min}, \bibinfo{person}{Ikki Nagaoka},
  \bibinfo{person}{Kosuke Fukumitsu}, \bibinfo{person}{Iori Ishikawa},
  \bibinfo{person}{Teruo Tanimoto}, \bibinfo{person}{Masamitsu Tanaka},
  \bibinfo{person}{Koji Inoue}, {and} \bibinfo{person}{Jangwoo Kim}.}
  \bibinfo{year}{2022}\natexlab{}.
\newblock \showarticletitle{{XQsim: modeling cross-technology control
  processors for 10+ K qubit quantum computers}}. In
  \bibinfo{booktitle}{\emph{Proceedings of the 49th Annual International
  Symposium on Computer Architecture}}. \bibinfo{pages}{366--382}.
\newblock


\bibitem[Chamberland and Campbell(2022)]%
        {chamberland2022universal}
\bibfield{author}{\bibinfo{person}{Christopher Chamberland} {and}
  \bibinfo{person}{Earl~T Campbell}.} \bibinfo{year}{2022}\natexlab{}.
\newblock \showarticletitle{Universal quantum computing with twist-free and
  temporally encoded lattice surgery}.
\newblock \bibinfo{journal}{\emph{PRX Quantum}} \bibinfo{volume}{3},
  \bibinfo{number}{1} (\bibinfo{year}{2022}), \bibinfo{pages}{010331}.
\newblock


\bibitem[Dial(2022)]%
        {dial2022eagle}
\bibfield{author}{\bibinfo{person}{Oliver Dial}.}
  \bibinfo{year}{2022}\natexlab{}.
\newblock \bibinfo{title}{Eagle’s quantum performance progress}.
\newblock
\newblock


\bibitem[Ding et~al\mbox{.}(2018)]%
        {ding2018magic}
\bibfield{author}{\bibinfo{person}{Yongshan Ding}, \bibinfo{person}{Adam
  Holmes}, \bibinfo{person}{Ali Javadi-Abhari}, \bibinfo{person}{Diana
  Franklin}, \bibinfo{person}{Margaret Martonosi}, {and}
  \bibinfo{person}{Frederic Chong}.} \bibinfo{year}{2018}\natexlab{}.
\newblock \showarticletitle{Magic-state functional units: Mapping and
  scheduling multi-level distillation circuits for fault-tolerant quantum
  architectures}. In \bibinfo{booktitle}{\emph{2018 51st Annual IEEE/ACM
  International Symposium on Microarchitecture (MICRO)}}. IEEE,
  \bibinfo{pages}{828--840}.
\newblock


\bibitem[Duckering et~al\mbox{.}(2020)]%
        {duckering2020virtualized}
\bibfield{author}{\bibinfo{person}{Casey Duckering},
  \bibinfo{person}{Jonathan~M Baker}, \bibinfo{person}{David~I Schuster}, {and}
  \bibinfo{person}{Frederic~T Chong}.} \bibinfo{year}{2020}\natexlab{}.
\newblock \showarticletitle{{Virtualized Logical Qubits}: {A 2.5 D}
  Architecture for Error-Corrected Quantum Computing}. In
  \bibinfo{booktitle}{\emph{2020 53rd Annual IEEE/ACM International Symposium
  on Microarchitecture}}. IEEE, \bibinfo{pages}{173--185}.
\newblock


\bibitem[Eyerman et~al\mbox{.}(2017)]%
        {eyerman2017multi}
\bibfield{author}{\bibinfo{person}{Stijn Eyerman}, \bibinfo{person}{Wim
  Heirman}, \bibinfo{person}{Kristof Du~Bois}, {and} \bibinfo{person}{Ibrahim
  Hur}.} \bibinfo{year}{2017}\natexlab{}.
\newblock \showarticletitle{Multi-stage CPI stacks}.
\newblock \bibinfo{journal}{\emph{IEEE Computer Architecture Letters}}
  \bibinfo{volume}{17}, \bibinfo{number}{1} (\bibinfo{year}{2017}),
  \bibinfo{pages}{55--58}.
\newblock


\bibitem[Fowler et~al\mbox{.}(2012)]%
        {fowler2012surface}
\bibfield{author}{\bibinfo{person}{Austin~G. Fowler}, \bibinfo{person}{Matteo
  Mariantoni}, \bibinfo{person}{John~M. Martinis}, {and}
  \bibinfo{person}{Andrew~N. Cleland}.} \bibinfo{year}{2012}\natexlab{}.
\newblock \showarticletitle{Surface codes: Towards practical large-scale
  quantum computation}.
\newblock \bibinfo{journal}{\emph{Phys. Rev. A}}  \bibinfo{volume}{86}
  (\bibinfo{date}{Sep} \bibinfo{year}{2012}), \bibinfo{pages}{032324}.
\newblock
Issue 3.
\urldef\tempurl%
\url{https://doi.org/10.1103/PhysRevA.86.032324}
\showDOI{\tempurl}


\bibitem[Gidney(2021)]%
        {gidney2021stim}
\bibfield{author}{\bibinfo{person}{Craig Gidney}.}
  \bibinfo{year}{2021}\natexlab{}.
\newblock \showarticletitle{Stim: a fast stabilizer circuit simulator}.
\newblock \bibinfo{journal}{\emph{{Quantum}}}  \bibinfo{volume}{5}
  (\bibinfo{date}{July} \bibinfo{year}{2021}), \bibinfo{pages}{497}.
\newblock
\showISSN{2521-327X}
\urldef\tempurl%
\url{https://doi.org/10.22331/q-2021-07-06-497}
\showDOI{\tempurl}


\bibitem[Gidney and Eker{\aa{}}(2021)]%
        {gidney2021rsa}
\bibfield{author}{\bibinfo{person}{Craig Gidney} {and} \bibinfo{person}{Martin
  Eker{\aa{}}}.} \bibinfo{year}{2021}\natexlab{}.
\newblock \showarticletitle{How to factor 2048 bit {RSA} integers in 8 hours
  using 20 million noisy qubits}.
\newblock \bibinfo{journal}{\emph{{Quantum}}}  \bibinfo{volume}{5}
  (\bibinfo{date}{April} \bibinfo{year}{2021}), \bibinfo{pages}{433}.
\newblock
\showISSN{2521-327X}
\urldef\tempurl%
\url{https://doi.org/10.22331/q-2021-04-15-433}
\showDOI{\tempurl}


\bibitem[Gidney and Fowler(2018)]%
        {gidney2019efficient}
\bibfield{author}{\bibinfo{person}{Craig Gidney} {and}
  \bibinfo{person}{Austin~G. Fowler}.} \bibinfo{year}{2018}\natexlab{}.
\newblock \showarticletitle{Efficient magic state factories with a
  catalyzed|CCZ⟩to2|T⟩transformation}.
\newblock \bibinfo{journal}{\emph{Quantum}} (\bibinfo{year}{2018}).
\newblock


\bibitem[Gidney et~al\mbox{.}(2024)]%
        {gidney2024magic}
\bibfield{author}{\bibinfo{person}{Craig Gidney}, \bibinfo{person}{Noah
  Shutty}, {and} \bibinfo{person}{Cody Jones}.}
  \bibinfo{year}{2024}\natexlab{}.
\newblock \showarticletitle{Magic state cultivation: growing T states as cheap
  as CNOT gates}.
\newblock \bibinfo{journal}{\emph{arXiv preprint arXiv:2409.17595}}
  (\bibinfo{year}{2024}).
\newblock


\bibitem[Gold et~al\mbox{.}(2021)]%
        {gold2021entanglement}
\bibfield{author}{\bibinfo{person}{Alysson Gold}, \bibinfo{person}{JP
  Paquette}, \bibinfo{person}{Anna Stockklauser}, \bibinfo{person}{Matthew~J.
  Reagor}, \bibinfo{person}{M.~Sohaib Alam}, \bibinfo{person}{Andrew~J.
  Bestwick}, \bibinfo{person}{Nicolas Didier}, \bibinfo{person}{Ani Nersisyan},
  \bibinfo{person}{Feyza~B. Oruç}, \bibinfo{person}{Armin Razavi},
  \bibinfo{person}{Ben Scharmann}, \bibinfo{person}{Eyob~A. Sete},
  \bibinfo{person}{Biswajit Sur}, \bibinfo{person}{Davide Venturelli},
  \bibinfo{person}{Cody~James Winkleblack}, \bibinfo{person}{Filip~A.
  Wudarski}, \bibinfo{person}{Mike Harburn}, {and} \bibinfo{person}{Chad~T.
  Rigetti}.} \bibinfo{year}{2021}\natexlab{}.
\newblock \showarticletitle{Entanglement across separate silicon dies in a
  modular superconducting qubit device}.
\newblock \bibinfo{journal}{\emph{npj Quantum Information}}
  \bibinfo{volume}{7}, \bibinfo{number}{1} (\bibinfo{year}{2021}),
  \bibinfo{pages}{142}.
\newblock


\bibitem[Hamada et~al\mbox{.}(2024)]%
        {hamada2024efficient}
\bibfield{author}{\bibinfo{person}{Kou Hamada}, \bibinfo{person}{Yasunari
  Suzuki}, {and} \bibinfo{person}{Yuuki Tokunaga}.}
  \bibinfo{year}{2024}\natexlab{}.
\newblock \showarticletitle{Efficient and high-performance routing of
  lattice-surgery paths on three-dimensional lattice}.
\newblock \bibinfo{journal}{\emph{arXiv preprint arXiv:2401.15829}}
  (\bibinfo{year}{2024}).
\newblock


\bibitem[Hazard et~al\mbox{.}(2023)]%
        {hazard2023characterization}
\bibfield{author}{\bibinfo{person}{Thomas~M. Hazard}, \bibinfo{person}{Wayne
  Woods}, \bibinfo{person}{Danna Rosenberg}, \bibinfo{person}{Rabindra Das},
  \bibinfo{person}{Cyrus~F. Hirjibehedin}, \bibinfo{person}{David~K. Kim},
  \bibinfo{person}{Jeffery Knecht}, \bibinfo{person}{Justin~L. Mallek},
  \bibinfo{person}{A. Melville}, \bibinfo{person}{Bethany~M. Niedzielski},
  \bibinfo{person}{Kyle Serniak}, \bibinfo{person}{Katrina~M. Sliwa},
  \bibinfo{person}{Donna Ruth-Yost}, \bibinfo{person}{Jonilyn~L. Yoder},
  \bibinfo{person}{William~D. Oliver}, {and} \bibinfo{person}{Mollie~E.
  Schwartz}.} \bibinfo{year}{2023}\natexlab{}.
\newblock \showarticletitle{Characterization of superconducting through-silicon
  vias as capacitive elements in quantum circuits}.
\newblock \bibinfo{journal}{\emph{Applied Physics Letters}}
  \bibinfo{volume}{123}, \bibinfo{number}{15} (\bibinfo{year}{2023}).
\newblock


\bibitem[Herr et~al\mbox{.}(2017)]%
        {herr2017optimization}
\bibfield{author}{\bibinfo{person}{Daniel Herr}, \bibinfo{person}{Franco Nori},
  {and} \bibinfo{person}{Simon~J Devitt}.} \bibinfo{year}{2017}\natexlab{}.
\newblock \showarticletitle{Optimization of lattice surgery is NP-hard}.
\newblock \bibinfo{journal}{\emph{Npj quantum information}}
  \bibinfo{volume}{3}, \bibinfo{number}{1} (\bibinfo{year}{2017}),
  \bibinfo{pages}{35}.
\newblock


\bibitem[Higgott and Gidney(2023)]%
        {higgott2023sparse}
\bibfield{author}{\bibinfo{person}{Oscar Higgott} {and} \bibinfo{person}{Craig
  Gidney}.} \bibinfo{year}{2023}\natexlab{}.
\newblock \showarticletitle{Sparse Blossom: correcting a million errors per
  core second with minimum-weight matching}.
\newblock \bibinfo{journal}{\emph{arXiv preprint arXiv:2303.15933}}
  (\bibinfo{year}{2023}).
\newblock


\bibitem[Hirano et~al\mbox{.}(2024)]%
        {hirano2024leveraging}
\bibfield{author}{\bibinfo{person}{Yutaka Hirano}, \bibinfo{person}{Tomohiro
  Itogawa}, {and} \bibinfo{person}{Keisuke Fujii}.}
  \bibinfo{year}{2024}\natexlab{}.
\newblock \showarticletitle{Leveraging Zero-Level Distillation to Generate
  High-Fidelity Magic States}.
\newblock \bibinfo{journal}{\emph{arXiv preprint arXiv:2404.09740}}
  (\bibinfo{year}{2024}).
\newblock


\bibitem[Holmes et~al\mbox{.}(2020)]%
        {holmes2020nisq}
\bibfield{author}{\bibinfo{person}{Adam Holmes}, \bibinfo{person}{Mohammad~Reza
  Jokar}, \bibinfo{person}{Ghasem Pasandi}, \bibinfo{person}{Yongshan Ding},
  \bibinfo{person}{Massoud Pedram}, {and} \bibinfo{person}{Frederic~T. Chong}.}
  \bibinfo{year}{2020}\natexlab{}.
\newblock \showarticletitle{{NISQ}+: Boosting Quantum Computing Power by
  Approximating Quantum Error Correction}. In
  \bibinfo{booktitle}{\emph{Proceedings of the ACM/IEEE 47th Annual
  International Symposium on Computer Architecture}}.
  \bibinfo{pages}{556–569}.
\newblock
\showISBNx{9781728146614}
\urldef\tempurl%
\url{https://doi.org/10.1109/ISCA45697.2020.00053}
\showDOI{\tempurl}


\bibitem[Horsman et~al\mbox{.}(2012)]%
        {horsman2012surface}
\bibfield{author}{\bibinfo{person}{Clare Horsman}, \bibinfo{person}{Austin~G
  Fowler}, \bibinfo{person}{Simon Devitt}, {and} \bibinfo{person}{Rodney
  Van~Meter}.} \bibinfo{year}{2012}\natexlab{}.
\newblock \showarticletitle{Surface code quantum computing by lattice surgery}.
\newblock \bibinfo{journal}{\emph{New Journal of Physics}}
  \bibinfo{volume}{14}, \bibinfo{number}{12} (\bibinfo{year}{2012}),
  \bibinfo{pages}{123011}.
\newblock


\bibitem[Itogawa et~al\mbox{.}(2024)]%
        {itogawa2024even}
\bibfield{author}{\bibinfo{person}{Tomohiro Itogawa}, \bibinfo{person}{Yugo
  Takada}, \bibinfo{person}{Yutaka Hirano}, {and} \bibinfo{person}{Keisuke
  Fujii}.} \bibinfo{year}{2024}\natexlab{}.
\newblock \showarticletitle{Even more efficient magic state distillation by
  zero-level distillation}.
\newblock \bibinfo{journal}{\emph{arXiv preprint arXiv:2403.03991}}
  (\bibinfo{year}{2024}).
\newblock


\bibitem[Kitaev(1997)]%
        {kitaev1997quantum}
\bibfield{author}{\bibinfo{person}{Alexei~Yu. Kitaev}.}
  \bibinfo{year}{1997}\natexlab{}.
\newblock \showarticletitle{Quantum computations: algorithms and error
  correction}.
\newblock \bibinfo{journal}{\emph{Russian Mathematical Surveys}}
  \bibinfo{volume}{52}, \bibinfo{number}{6} (\bibinfo{year}{1997}),
  \bibinfo{pages}{1191--1249}.
\newblock


\bibitem[Kivlichan et~al\mbox{.}(2020)]%
        {kivlichan2020improved}
\bibfield{author}{\bibinfo{person}{Ian~D. Kivlichan}, \bibinfo{person}{Craig
  Gidney}, \bibinfo{person}{Dominic~W. Berry}, \bibinfo{person}{Nathan Wiebe},
  \bibinfo{person}{Jarrod~R. McClean}, \bibinfo{person}{Wei Sun},
  \bibinfo{person}{Zhang Jiang}, \bibinfo{person}{Nicholas~C. Rubin},
  \bibinfo{person}{Austin~G. Fowler}, \bibinfo{person}{Al{\'a}n Aspuru-Guzik},
  \bibinfo{person}{Ryan Babbush}, {and} \bibinfo{person}{Hartmut Neven}.}
  \bibinfo{year}{2020}\natexlab{}.
\newblock \showarticletitle{Improved fault-tolerant quantum simulation of
  condensed-phase correlated electrons via trotterization}.
\newblock \bibinfo{journal}{\emph{Quantum}}  \bibinfo{volume}{4}
  (\bibinfo{year}{2020}), \bibinfo{pages}{296}.
\newblock


\bibitem[Knill(2005)]%
        {knill2005quantum}
\bibfield{author}{\bibinfo{person}{Emanuel Knill}.}
  \bibinfo{year}{2005}\natexlab{}.
\newblock \showarticletitle{Quantum computing with realistically noisy
  devices}.
\newblock \bibinfo{journal}{\emph{Nature}} \bibinfo{volume}{434},
  \bibinfo{number}{7029} (\bibinfo{year}{2005}), \bibinfo{pages}{39--44}.
\newblock


\bibitem[Lao et~al\mbox{.}(2018)]%
        {lao2018mapping}
\bibfield{author}{\bibinfo{person}{Lingling Lao}, \bibinfo{person}{Bas van
  Wee}, \bibinfo{person}{Imran Ashraf}, \bibinfo{person}{J van Someren},
  \bibinfo{person}{Nader Khammassi}, \bibinfo{person}{Koen Bertels}, {and}
  \bibinfo{person}{Carmen~G Almudever}.} \bibinfo{year}{2018}\natexlab{}.
\newblock \showarticletitle{Mapping of lattice surgery-based quantum circuits
  on surface code architectures}.
\newblock \bibinfo{journal}{\emph{Quantum Science and Technology}}
  \bibinfo{volume}{4}, \bibinfo{number}{1} (\bibinfo{year}{2018}),
  \bibinfo{pages}{015005}.
\newblock


\bibitem[Lee et~al\mbox{.}(2021)]%
        {lee2021even}
\bibfield{author}{\bibinfo{person}{Joonho Lee}, \bibinfo{person}{Dominic~W
  Berry}, \bibinfo{person}{Craig Gidney}, \bibinfo{person}{William~J Huggins},
  \bibinfo{person}{Jarrod~R McClean}, \bibinfo{person}{Nathan Wiebe}, {and}
  \bibinfo{person}{Ryan Babbush}.} \bibinfo{year}{2021}\natexlab{}.
\newblock \showarticletitle{Even more efficient quantum computations of
  chemistry through tensor hypercontraction}.
\newblock \bibinfo{journal}{\emph{PRX Quantum}} \bibinfo{volume}{2},
  \bibinfo{number}{3} (\bibinfo{year}{2021}), \bibinfo{pages}{030305}.
\newblock


\bibitem[Litinski(2019a)]%
        {litinski2019game}
\bibfield{author}{\bibinfo{person}{Daniel Litinski}.}
  \bibinfo{year}{2019}\natexlab{a}.
\newblock \showarticletitle{A game of surface codes: Large-scale quantum
  computing with lattice surgery}.
\newblock \bibinfo{journal}{\emph{Quantum}}  \bibinfo{volume}{3}
  (\bibinfo{year}{2019}), \bibinfo{pages}{128}.
\newblock


\bibitem[Litinski(2019b)]%
        {litinski2019magic}
\bibfield{author}{\bibinfo{person}{Daniel Litinski}.}
  \bibinfo{year}{2019}\natexlab{b}.
\newblock \showarticletitle{Magic state distillation: Not as costly as you
  think}.
\newblock \bibinfo{journal}{\emph{Quantum}}  \bibinfo{volume}{3}
  (\bibinfo{year}{2019}), \bibinfo{pages}{205}.
\newblock


\bibitem[Low and Chuang(2019)]%
        {low2019hamiltonian}
\bibfield{author}{\bibinfo{person}{Guang~Hao Low} {and}
  \bibinfo{person}{Isaac~L Chuang}.} \bibinfo{year}{2019}\natexlab{}.
\newblock \showarticletitle{Hamiltonian simulation by qubitization}.
\newblock \bibinfo{journal}{\emph{Quantum}}  \bibinfo{volume}{3}
  (\bibinfo{year}{2019}), \bibinfo{pages}{163}.
\newblock


\bibitem[Martyn et~al\mbox{.}(2021)]%
        {martyn2021grand}
\bibfield{author}{\bibinfo{person}{John~M Martyn}, \bibinfo{person}{Zane~M
  Rossi}, \bibinfo{person}{Andrew~K Tan}, {and} \bibinfo{person}{Isaac~L
  Chuang}.} \bibinfo{year}{2021}\natexlab{}.
\newblock \showarticletitle{Grand unification of quantum algorithms}.
\newblock \bibinfo{journal}{\emph{PRX quantum}} \bibinfo{volume}{2},
  \bibinfo{number}{4} (\bibinfo{year}{2021}), \bibinfo{pages}{040203}.
\newblock


\bibitem[Molavi et~al\mbox{.}(2023)]%
        {molavi2023compilation}
\bibfield{author}{\bibinfo{person}{Abtin Molavi}, \bibinfo{person}{Amanda Xu},
  \bibinfo{person}{Swamit Tannu}, {and} \bibinfo{person}{Aws Albarghouthi}.}
  \bibinfo{year}{2023}\natexlab{}.
\newblock \showarticletitle{Compilation for Surface Code Quantum Computers}.
\newblock \bibinfo{journal}{\emph{arXiv preprint arXiv:2311.18042}}
  (\bibinfo{year}{2023}).
\newblock


\bibitem[Reiher et~al\mbox{.}(2017)]%
        {reiher2017elucidating}
\bibfield{author}{\bibinfo{person}{Markus Reiher}, \bibinfo{person}{Nathan
  Wiebe}, \bibinfo{person}{Krysta~M Svore}, \bibinfo{person}{Dave Wecker},
  {and} \bibinfo{person}{Matthias Troyer}.} \bibinfo{year}{2017}\natexlab{}.
\newblock \showarticletitle{Elucidating reaction mechanisms on quantum
  computers}.
\newblock \bibinfo{journal}{\emph{Proceedings of the national academy of
  sciences}} \bibinfo{volume}{114}, \bibinfo{number}{29}
  (\bibinfo{year}{2017}), \bibinfo{pages}{7555--7560}.
\newblock


\bibitem[Skoric et~al\mbox{.}(2023)]%
        {skoric2023parallel}
\bibfield{author}{\bibinfo{person}{Luka Skoric}, \bibinfo{person}{Dan~E
  Browne}, \bibinfo{person}{Kenton~M Barnes}, \bibinfo{person}{Neil~I
  Gillespie}, {and} \bibinfo{person}{Earl~T Campbell}.}
  \bibinfo{year}{2023}\natexlab{}.
\newblock \showarticletitle{Parallel window decoding enables scalable fault
  tolerant quantum computation}.
\newblock \bibinfo{journal}{\emph{Nature Communications}} \bibinfo{volume}{14},
  \bibinfo{number}{1} (\bibinfo{year}{2023}), \bibinfo{pages}{7040}.
\newblock


\bibitem[Smith et~al\mbox{.}(2022)]%
        {smith2022scaling}
\bibfield{author}{\bibinfo{person}{Kaitlin~N Smith},
  \bibinfo{person}{Gokul~Subramanian Ravi}, \bibinfo{person}{Jonathan~M Baker},
  {and} \bibinfo{person}{Frederic~T Chong}.} \bibinfo{year}{2022}\natexlab{}.
\newblock \showarticletitle{Scaling superconducting quantum computers with
  chiplet architectures}. In \bibinfo{booktitle}{\emph{2022 55th IEEE/ACM
  International Symposium on Microarchitecture (MICRO)}}. IEEE,
  \bibinfo{pages}{1092--1109}.
\newblock


\bibitem[Tan et~al\mbox{.}(2024)]%
        {tan2024sat}
\bibfield{author}{\bibinfo{person}{Daniel~Bochen Tan},
  \bibinfo{person}{Murphy~Yuezhen Niu}, {and} \bibinfo{person}{Craig Gidney}.}
  \bibinfo{year}{2024}\natexlab{}.
\newblock \showarticletitle{{A SAT Scalpel for Lattice Surgery: Representation
  and Synthesis of Subroutines for Surface-Code Fault-Tolerant Quantum
  Computing}}.
\newblock \bibinfo{journal}{\emph{arXiv preprint arXiv:2404.18369}}
  (\bibinfo{year}{2024}).
\newblock


\bibitem[Tannu et~al\mbox{.}(2017a)]%
        {tannu2017cryogenic}
\bibfield{author}{\bibinfo{person}{Swamit~S. Tannu},
  \bibinfo{person}{Douglas~M. Carmean}, {and} \bibinfo{person}{Moinuddin~K.
  Qureshi}.} \bibinfo{year}{2017}\natexlab{a}.
\newblock \showarticletitle{{Cryogenic-DRAM} Based Memory System for Scalable
  Quantum Computers: A Feasibility Study}. In
  \bibinfo{booktitle}{\emph{Proceedings of the International Symposium on
  Memory Systems}}. \bibinfo{pages}{189–195}.
\newblock
\showISBNx{9781450353359}
\urldef\tempurl%
\url{https://doi.org/10.1145/3132402.3132436}
\showDOI{\tempurl}


\bibitem[Tannu et~al\mbox{.}(2017b)]%
        {tannu2017taming}
\bibfield{author}{\bibinfo{person}{Swamit~S. Tannu},
  \bibinfo{person}{Zachary~A. Myers}, \bibinfo{person}{Prashant~J. Nair},
  \bibinfo{person}{Douglas~M. Carmean}, {and} \bibinfo{person}{Moinuddin~K.
  Qureshi}.} \bibinfo{year}{2017}\natexlab{b}.
\newblock \showarticletitle{Taming the Instruction Bandwidth of Quantum
  Computers via Hardware-Managed Error Correction}. In
  \bibinfo{booktitle}{\emph{2017 50th Annual IEEE/ACM International Symposium
  on Microarchitecture}}. \bibinfo{pages}{679--691}.
\newblock


\bibitem[Terhal(2015)]%
        {terhal2015quantum}
\bibfield{author}{\bibinfo{person}{Barbara~M Terhal}.}
  \bibinfo{year}{2015}\natexlab{}.
\newblock \showarticletitle{Quantum error correction for quantum memories}.
\newblock \bibinfo{journal}{\emph{Reviews of Modern Physics}}
  \bibinfo{volume}{87}, \bibinfo{number}{2} (\bibinfo{year}{2015}),
  \bibinfo{pages}{307}.
\newblock


\bibitem[Ueno et~al\mbox{.}(2021)]%
        {ueno2021qecool}
\bibfield{author}{\bibinfo{person}{Yosuke Ueno}, \bibinfo{person}{Masaaki
  Kondo}, \bibinfo{person}{Masamitsu Tanaka}, \bibinfo{person}{Yasunari
  Suzuki}, {and} \bibinfo{person}{Yutaka Tabuchi}.}
  \bibinfo{year}{2021}\natexlab{}.
\newblock \showarticletitle{{QECOOL}: On-Line Quantum Error Correction with a
  Superconducting Decoder for Surface Code}. In \bibinfo{booktitle}{\emph{2021
  58th ACM/IEEE Design Automation Conference (DAC)}}.
  \bibinfo{pages}{451--456}.
\newblock
\urldef\tempurl%
\url{https://doi.org/10.1109/DAC18074.2021.9586326}
\showDOI{\tempurl}


\bibitem[Ueno et~al\mbox{.}(2022a)]%
        {ueno2022neo}
\bibfield{author}{\bibinfo{person}{Yosuke Ueno}, \bibinfo{person}{Masaaki
  Kondo}, \bibinfo{person}{Masamitsu Tanaka}, \bibinfo{person}{Yasunari
  Suzuki}, {and} \bibinfo{person}{Yutaka Tabuchi}.}
  \bibinfo{year}{2022}\natexlab{a}.
\newblock \showarticletitle{{NEO-QEC: Neural Network Enhanced Online
  Superconducting Decoder for Surface Codes}}.
\newblock \bibinfo{journal}{\emph{arXiv preprint arXiv:2208.05758}}
  (\bibinfo{year}{2022}).
\newblock


\bibitem[Ueno et~al\mbox{.}(2022b)]%
        {ueno2022qulatis}
\bibfield{author}{\bibinfo{person}{Yosuke Ueno}, \bibinfo{person}{Masaaki
  Kondo}, \bibinfo{person}{Masamitsu Tanaka}, \bibinfo{person}{Yasunari
  Suzuki}, {and} \bibinfo{person}{Yutaka Tabuchi}.}
  \bibinfo{year}{2022}\natexlab{b}.
\newblock \showarticletitle{{QULATIS: A Quantum Error Correction Methodology
  toward Lattice Surgery}}. In \bibinfo{booktitle}{\emph{2022 IEEE
  International Symposium on High-Performance Computer Architecture}}.
  \bibinfo{pages}{274--287}.
\newblock


\bibitem[Viszlai et~al\mbox{.}(2023)]%
        {viszlai2023architecture}
\bibfield{author}{\bibinfo{person}{Joshua Viszlai},
  \bibinfo{person}{Sophia~Fuhui Lin}, \bibinfo{person}{Siddharth Dangwal},
  \bibinfo{person}{Jonathan~M Baker}, {and} \bibinfo{person}{Frederic~T
  Chong}.} \bibinfo{year}{2023}\natexlab{}.
\newblock \showarticletitle{An Architecture for Improved Surface Code
  Connectivity in Neutral Atoms}.
\newblock \bibinfo{journal}{\emph{arXiv preprint arXiv:2309.13507}}
  (\bibinfo{year}{2023}).
\newblock


\bibitem[Yoshioka et~al\mbox{.}(2024)]%
        {yoshioka2022hunting}
\bibfield{author}{\bibinfo{person}{Nobuyuki Yoshioka},
  \bibinfo{person}{Tsuyoshi Okubo}, \bibinfo{person}{Yasunari Suzuki},
  \bibinfo{person}{Yuki Koizumi}, {and} \bibinfo{person}{Wataru Mizukami}.}
  \bibinfo{year}{2024}\natexlab{}.
\newblock \showarticletitle{Hunting for quantum-classical crossover in
  condensed matter problems}.
\newblock \bibinfo{journal}{\emph{npj Quantum Information}}
  \bibinfo{volume}{10}, \bibinfo{number}{1} (\bibinfo{year}{2024}),
  \bibinfo{pages}{45}.
\newblock


\bibitem[Yost et~al\mbox{.}(2020)]%
        {yost2020solid}
\bibfield{author}{\bibinfo{person}{D.~R.~W. Yost}, \bibinfo{person}{Mollie~E.
  Schwartz}, \bibinfo{person}{Justin~L. Mallek}, \bibinfo{person}{Danna
  Rosenberg}, \bibinfo{person}{Corey Stull}, \bibinfo{person}{Jonilyn~L.
  Yoder}, \bibinfo{person}{G. Calusine}, \bibinfo{person}{Matthew~T. Cook},
  \bibinfo{person}{Rabindra Das}, \bibinfo{person}{Alexandra~L. Day},
  \bibinfo{person}{Evan~B. Golden}, \bibinfo{person}{David~K. Kim},
  \bibinfo{person}{A. Melville}, \bibinfo{person}{Bethany~M. Niedzielski},
  \bibinfo{person}{Wayne Woods}, \bibinfo{person}{Andrew~J. Kerman}, {and}
  \bibinfo{person}{W.~D. Oliver}.} \bibinfo{year}{2020}\natexlab{}.
\newblock \showarticletitle{Solid-state qubits integrated with superconducting
  through-silicon vias}.
\newblock \bibinfo{journal}{\emph{npj Quantum Information}}
  \bibinfo{volume}{6}, \bibinfo{number}{1} (\bibinfo{year}{2020}),
  \bibinfo{pages}{59}.
\newblock


\end{thebibliography}


\end{document}